\def\ps@headings{%
\def\@oddhead{\mbox{}\scriptsize\rightmark \hfil \thepage}%
\def\@evenhead{\scriptsize\thepage \hfil \leftmark\mbox{}}%
\def\@oddfoot{}%
\def\@evenfoot{}}
\makeatother \pagestyle{headings}
\newtheorem{assumption}{Assumption}
\newtheorem{lemma}{Lemma}
\newtheorem{proposition}{Proposition}
\newtheorem{definition}{Definition}
\newtheorem{theorem}{Theorem}
\newcommand{\footnotesc}[1]{\footnote{#1}}
\theoremstyle{plain}
\newcommand{\rev}[1]{{\color{blue}#1}} 
\newcommand{\revh}[1]{{\color{blue}#1}}  
\newcommand{\comg}[1]{\textbf{\color{green} (COMMENT: #1)}}
\newcommand{\response}[1]{\textbf{\color{green} (RESPONSE: #1)}}
\newcommand{\rev}[1]{{\color{blue}#1}} 
\newcommand{\revh}[1]{#1}
\newcommand{\comg}[1]{}
\newcommand{\response}[1]{}
\def\Ex{\mathrm{E}}
\def\N{N}            
\def\Nset{\mathcal{\N}}   
\def\n{n}             
\def\l{s}             
\def\B{\textsf{b}}
\def\LS{\textsf{l}}
\def\A{\textsf{a}}
\def\K{K}                
\def\Kset{\mathcal{\K}}   
\def\k{k}
\def\ch{{channel}}
\def\chs{{channels}}
\def\tvch{{TV \ch}}
\def\db{{database}}
\def\lh{{licensee}}
\def\eu{{user}}
\def\eus{{users}}
\def\Eu{{User}}
\def\Nkset{\Nset_{\k}}
\def\RB{B}  
\def\RL{L}  
\def\RA{A}  
\def\Prob{\eta}      
\def\BProb{\boldsymbol{\eta}}      
\def\Proba{\Prob_{\textsc{a}}}      
\def\Probb{\Prob_{\textsc{b}}}      
\def\Probl{\Prob_{\textsc{l}}}      
\def\fx{f}  
\def\gy{g}  
\def\th{\theta}      
\def\thlb{\th_{\textsc{LB}}}  
\def\thab{\th_{\textsc{AB}}}  
\def\thla{\th_{\textsc{LA}}}  
\def\p{\pi}           
\def\pa{\p_\textsc{a}}           
\def\pl{\p_\textsc{l}}           
\def\w{w}  
\def\U{\Pi}
\def\Ur{\widetilde{\U}}
\def\Ueu{\U^{\textsc{eu}}}
\def\Udb{\U^{\textsc{db}}}
\def\Usl{\U^{\textsc{sl}}}
\def\Udbo{\U^{\textsc{db}}_{0}}
\def\Uslo{\U^{\textsc{sl}}_{0}}
\def\Udbrs{\U^{\textsc{db}}_{\textsc{(i)}}}
\def\Uslrs{\U^{\textsc{sl}}_{\textsc{(i)}}}
\def\Urdbrs{\Ur^{\textsc{db}}_{\textsc{(i)}}}
\def\Urslrs{\Ur^{\textsc{sl}}_{\textsc{(i)}}}
\def\eq{\triangleq}
\def\R{R}
\def\Ra{\R_{\A}}
\def\InfTV{v}         
\def\InfOut{o}        
\def\InfEU{w}         
\def\InfKnown{\bar{z}}      
\def\InfKnownMin{ \InfKnown_{\textsc{min}} }      
\def\InfUnknown{\hat{z}}    
\def\InfTot{z}        
\def\InfTotA{ \InfTot_{(\A)} }        
\def\pl{\p_\textsc{l}}     
\def\t{t}
\def\ut{U}
\def\rt{\mathcal{R}}
\begin{document}

\title{HySIM: A Hybrid Spectrum and Information Market for TV White Space Networks\vspace{-8mm}
}

\author{Yuan~Luo, 
        Lin~Gao, 
        and~Jianwei~Huang
\thanks{\rev{This work is supported by ...}}
\thanks{The authors are with Dept.~of Information Engineering, The Chinese University of Hong Kong, HK,
Email: \{ly011, lgao, jwhuang\}@ie.cuhk.edu.hk.}
}

\addtolength{\abovedisplayskip}{-1mm}
\addtolength{\belowdisplayskip}{-1mm}

\maketitle

\begin{abstract}
\rev{
We propose a hybrid spectrum and information market for a database-assisted TV white space network, where the geo-location database serves as both a spectrum market platform
and an information market platform.
We study the interactions among the database operator, the spectrum licensee, and unlicensed users systematically, using a three-layer hierarchical model.
In Layer I, the database and the licensee negotiate regarding the commission for the licensee to use the spectrum market platform.
In Layer II, the database and the licensee compete for selling information or channels to unlicensed users.
In Layer III, unlicensed users determine whether they should buy exclusive usage right of licensed channels (from the licensee) or information regarding unlicensed channels (from the database).
Analyzing such a three-layer model is challenging due to the co-existence of both positive and negative network externalities in the information market.
Despite of this, we are able to characterize how the network externalities affect the equilibrium behaviors of all parties involved.
Our numerical results show that the proposed hybrid market can improve the network profit up to $87\%$, compared with a pure information market. Meanwhile, it can achieve a network profit very close to the coordinated benchmark solution (the gap is less than $4\%$ in our simulation).
}
\end{abstract}

\IEEEpeerreviewmaketitle

\thispagestyle{firststyle}


\section{Introduction}\label{sec:intro}



\subsection{Background}

With the explosive growth of mobile smartphones and bandwidth-hunger wireless applications, radio spectrum has become increasingly scarce \cite{sen2013sdp}.
The UHF/VHF frequency band originally assigned for broadcast television services (hereafter called TV channels) has been viewed as a promising spectrum opportunity for supporting new wireless broadband services.
First, in many places there are many vacant (unused) TV channels (i.e., those unlicensed to any TV licensee),
often called \emph{TV white spaces} \cite{Microsoft2010report, Ofcom2012tvws, federal2012third}, which can be used for supporting unlicensed non-TV wireless services.
Second, even the licensed TV channels (i.e., allocated to certain TV licensee) may be under a low utilization in most time \cite{cognitive}, and hence can be opportunistically reused by unlicensed non-TV wireless services with the permissions of licensees.
 

To effectively exploit the TV white spaces while not harming the interests of licensed devices (TVs),
the industry has started to adopt a \emph{database-assisted} TV white space network architecture  \cite{Ofcom2010geo, Google, Microsoft, SpectrumBridge}.
In this architecture, unlicensed  devices obtain the list of available unlicensed TV channels via querying a certified white space \emph{geo-location} database, which periodically updates information based on a repository of licensees.
Meanwhile, the FCC also allows the spectrum licensees to temporarily   lease their licensed channels to unlicensed devices through, for example,  auction \cite{federal2004leasing}.
Such spectrum trading (leasing) requires a market platform, and the geo-location database can potentially serve as such a platform (e.g., \emph{SpecEx} \cite{SpectrumBridgeCommericial2}) due to its proximity to both spectrum licensees and unlicensed devices.
This means that the geo-location database can facilitate the unlicensed spectrum access to both unlicensed and licensed TV channels.

Recently, researchers have proposed several business and marketing models related to the database-assisted spectrum sharing, which can be categorized into two classes: \emph{Spectrum Market} and \emph{Information Market}.
The first class ({spectrum market}   \cite{luo2012,feng2013database,luo2013}) mainly deals with the trading of
licensed TV channels.
The key idea is to let spectrum licensees temporarily lease their {under-utilized} licensed TV channels to unlicensed users for some additional revenue.
The database serves as a market platform facilitating this trading process.\footnote{For example, it acts as a spectrum broker or agent, purchases spectrum from licensees and then resells the purchased spectrum to unlicensed users.}
A commercial example of such a \emph{database-provided} spectrum market platform is \emph{SpecEx} \cite{SpectrumBridgeCommericial2}, operated by SpectrumBridge.\footnote{\rev{The (secondary) spectrum market has been extensively used in dynamic spectrum access networks (see, e.g., \cite{zhou2008,zhou2009,newadd-0,newadd-1,newadd-2}), where auction, contract, and pricing are commonly used theoretic models. 
The auction and contract models usually focus on the information asymmetry. 
In this work, we mainly focus on the interplay between the spectrum market and information. Hence, we will consider the basic pricing model for the spectrum market.}} 

The second class ({information market}) has been recently introduced by Luo \emph{et al.} for the unlicensed TV channels (i.e., TV white spaces)\cite{luo2014wiopt, luo2014SDP}.
In their models, the geo-location database sells the advanced information regarding the quality of unlicensed TV channels, instead of channels, to the unlicensed users for profit.
The key motivation is that the database knows more information regarding TV white spaces than unlicensed users,\footnote{For example, based on the knowledge about the network infrastructures of TV licensees and their licensed channels, the database can predict the average interference (from licensed devices) on each TV channel at each location.} and hence it can provide information that helps unlicensed users improve their performances.
A practical example of information market is \emph{White Space Plus} \cite{SpectrumBridgeCommericial}, again operated by SpectrumBridge.

In practice, \emph{both the licensed TV channels and unlicensed TV white spaces co-exist at a particular location}.
Some users may prefer to lease the licensed TV channels from licensees for the exclusive usage,
while other users may prefer to share the free unlicensed TV white spaces with others.
Hence, a joint formulation and optimization of both spectrum market and information market is highly desirable.
However, none of the existing work \cite{luo2012,feng2013database,luo2013,luo2014wiopt, luo2014SDP} looked at the interaction between spectrum market and information market.
This motivates our study of a \emph{hybrid} spectrum and information market for the database in TV white space networks.

\begin{figure}
\centering
  \includegraphics[width=3.1in]{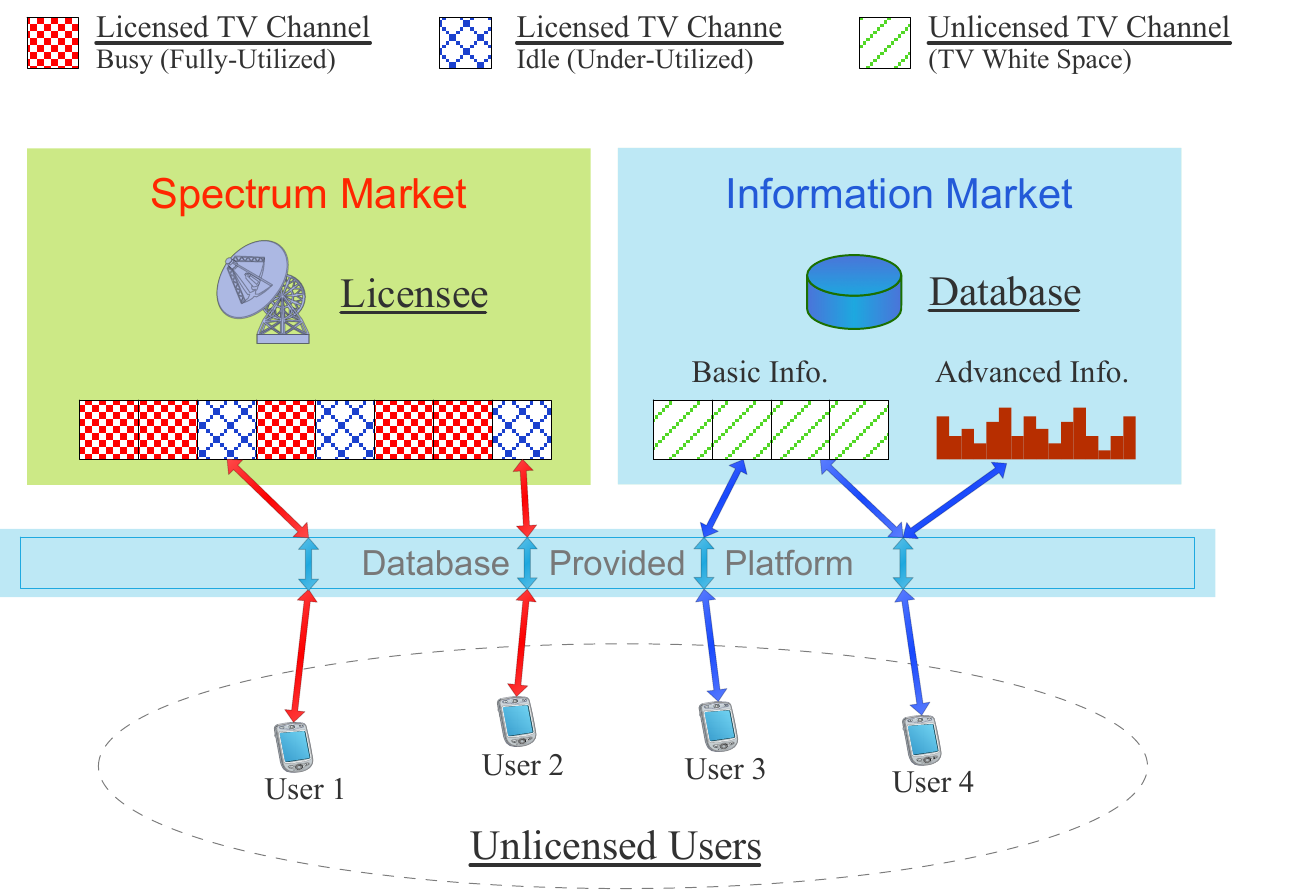}
  \caption{Database-Provided Hybrid Spectrum and Informaiton Market.}\label{fig:model}
  \vspace{-3mm}
\end{figure}

\subsection{Contributions}

In this paper, we model and study a \underline{Hy}brid \underline{S}pectrum and \underline{I}nformation \underline{M}arket (HySIM) for a database-assisted TV white space network,
in which the geo-location database serves as (i) a \emph{spectrum market} platform for the trading of (under-utilized) licensed TV channels between spectrum licensees and unlicensed users,
and (ii) an \emph{information market} platform for the trading of advanced information (regarding the unlicensed TV white spaces) between the database itself and unlicensed users.
Unlicensed users can choose to lease the licensed TV channels from licensees (via the database) for the exclusive usage,
or to share the free TV white spaces with others.
In the latter case, users can further decide whether to purchase the advanced information regarding these TV white spaces from the database to enhance the performance.

Figure \ref{fig:model} illustrates such a database-provided HySIM framework. Unlicensed users 1 and 2 lease the licensed TV channels from the spectrum licensee (via the database-provided platform), and users 3 and 4 share the free unlicensed TV white spaces with others. 
User 4 further purchases the advanced information to improve its performance.


In order to thoroughly understand the user behaving, the market evolving, and the equilibrium in such a hybrid market, we formulate the interactions among the geo-location database (operator), the spectrum licensee, and unlicensed users as a three-layer hierarchical model:

\subsubsection{Layer I: Commission Negotiation (in Section \ref{sec:layer1})}
In the first layer, the database and the licensee negotiate the commission fee that the spectrum licensee needs to pay for using the spectrum market platform. 
Specifically, the database, as the spectrum market platform, helps the spectrum licensee to display, advertise, and sell the under-utilized TV channels to unlicensed users.
Accordingly, it takes some \emph{commission fee} from each successful transaction between the spectrum licensee and unlicensed users.
\rev{In this work, we consider the \emph{revenue sharing} scheme (RSS), where the licensee shares a fixed percentage of revenue with the database,\footnote{\rev{Another commonly-used commission scheme is the so-called \emph{wholesale pricing} scheme (WPS), where the database charges the licensee a fixed price for each successful transaction, regardless of the exact revenue of the licensee. We will study the problem under WPS in our future work.}} 
and study the RSS negotiation using the Nash bargaining theory \cite{harsanyi1977bargaining}.}~~~~~~~~~~~~~~~~~

\subsubsection{{Layer II: Price Competition Game} (in Section \ref{sec:layer2})}
In the second layer, the database and the spectrum licensee compete with each other for selling information or channels to unlicensed users.
The spectrum licensee decides the price of the licensed TV channels,
and the database decides the price of the advanced information (regarding the unlicensed TV channels).
We analyze the equilibrium of such a price competition game using the supermodular game theory \cite{topkis1998supermodular}.~~~~~~~~ 

\subsubsection{{Layer III: User Behaving and Market Dynamics (in Section \ref{sec:user_subscript})}}
In the third layer, unlicensed users decide the best purchasing decisions, given the database's information price and the licensee's channel price.
Note that the users' best purchasing behaviours dynamically change due to the \emph{negative} and \emph{positive} network externalities of the information market (see Section \ref{sec:model-externality} for details).
We will show how the market dynamically evolves according to the users' best choices, and what the \emph{market equilibrium} point is.

In summary, we list the main contributions as follows. 

\begin{itemize}

\item
\emph{Novelty and Practical Significance:}
To the best of our knowledge, this is the first paper that proposes and studies a hybrid spectrum and information market for promoting the unlicensed spectrum access to both licensed and unlicensed TV channels.

\item
\emph{Modeling and Solution Techniques:}
We formulate the interactions as a three-layer hierarchical model, and analyze the model by backward induction, using market equilibrium theory, supermodular game theory, and Nash bargaining theory, respectively. 

\item
\emph{Performance Evaluations:}
{Our numerical results show that the proposed hybrid market can bring up to $87\%$ network profit gain, compared with a pure information market.
The gap between our achieved network profit and the coordinated benchmark is less than $4\%$.
}

\end{itemize}




\section{System Model}\label{sec:model}

We consider a database-assisted TV white space network with a \emph{geo-location {\db}} and a set of \emph{unlicensed users} (devices) in a particular region (e.g., a city). 
Unlicensed users can use the unlicensed TV channels (i.e., TV white spaces) freely in a shared manner (e.g., using CDMA or CSMA). 
Meanwhile, there is a \emph{spectrum licensee}, who owns the licensed channels and wants to lease the under-utilized channels to unlicensed users for additional revenue.\footnote{In case there are multiple spectrum licensees, we assume that they are coordinated by the single representative. We will leave the case with multiple competitive spectrum licensees to a future work.}
Different from the unlicensed TV white spaces, the licensed TV channels can be used by unlicensed users in an exclusive manner (with the permission of the licensee).
Therefore, users can enjoy a better performance (e.g., a higher data rate or a lower interference) on the licensed channels.
For convenience, let $\pl \geq 0$ denote the (licensed) {channel price} set by the spectrum licensee.

\subsection{Geo-location Database}

\subsubsection{Basic Service}
According to the regulation policy (e.g., \cite{federal2012third}), it is mandatory for a geo-location white space {\db} to provide the following information for any unlicensed {\eu}: (i) the list of TV white spaces (i.e., unlicensed TV channels), (ii) the transmission constraint (e.g., maximum transmission power) on each channel in the list, and (iii) other optional requirements.
The database needs to provide this \emph{basic (information) service} free of charge for any unlicensed user.

\subsubsection{Advanced Service}
Beyond the basic information, the {\db} can also provide certain advanced information regarding the quality of TV channels (as SpectrumBridge did in \cite{SpectrumBridgeCommericial}), which we call the \emph{advanced (information) service}, as long as it does not conflict with the free basic service.
Such an advanced information can be rather general, and a typical example is ``the interference level on each channel'' used in \cite{luo2014wiopt, luo2014SDP}.
With the advanced information, the {\eu} is able to choose a channel with the  highest quality (e.g., with the lowest interference level).
Hence, the {\db} can \emph{sell} this advanced information to users for profit.
This leads to an \emph{information market}.
For convenience, let $\pa \geq 0$ denote the (advanced) {information price} of the database.~~~~~~~~~~

\subsubsection{Leasing Service}
As mentioned previously, the geo-location database can also serve as a spectrum market \emph{platform} for the trading of licensed channels between the spectrum licensee and unlicensed users, which we call the \emph{leasing service}.
\rev{By doing this, the spectrum  licensee shares a fixed percentage $\delta \in [0,1]$ of  revenue with the database, which we called the revenue share commission scheme (RSS).} 

 \subsection{Unlicensed User}

Unlicensed users can choose either to purchase the licensed channel from the licensee for the exclusive usage, or to share the free unlicensed TV white spaces with others (with or without advanced information).
We assume that all licensed and unlicensed TV channels have the same bandwidth (e.g., 6MHz in the USA), and each user only needs one channel (either licensed or unlicensed) at a particular time.
Formally, we denote $\l \in \{\B, \A, \LS\}$ as the  \emph{strategy} of a user, where

\vspace{-1mm}
(i)
$\l = \B$:
Choose the basic service (i.e., share TV white spaces with others, without the advanced information);

\vspace{-1mm}
(ii)
$\l = \A$:
Choose the advanced service (i.e., share TV white spaces with others, with the advanced information).

\vspace{-1mm}
(iii)
$\l = \LS$:
Choose the leasing service (i.e., lease the licensed channel from the licensee for the exclusive usage).

We further denote $\RB$, $\RA$, and $\RL$ as the expected \emph{utility} that a user can achieve from choosing the basic service ($\l = \B$), the advanced service ($\l = \A$), and the leasing service ($\l = \LS$), respectively.
The \emph{payoff} of a {\eu} is defined as the difference between the achieved utility and the service cost (i.e., the  {information price} when choosing the advanced service, or the  {leasing price} if choosing the leasing service).
\rev{
Let $\th$ denote the \eu's evaluation for the achieved utility.
}
Then, the payoff of a user with an evaluation factor $\th$ can be written as 
\begin{equation}\label{eq:utility-basic}
\textstyle
\Ueu_{\th} = \left\{
  \begin{aligned}
  &\textstyle  \th \cdot \RB ,      &&  \ \text{if} ~ \l = \B, \\
  &\textstyle  \th \cdot \RA   -  \pa , &&  \  \text{if} ~ \l  = \A, \\
  &\textstyle  \th \cdot \RL - \pl ,      &&  \ \text{if} ~ \l = \LS.
   \end{aligned}
\right.
\end{equation}
Each user is rational and will choose a strategy $\l \in \{\B, \A, \LS\}$ that maximizes its payoff.
Note that different {\eus} may have different values of $\th$ (e.g., depending on application types), hence have different choices. That is, {\eus} are heterogeneous in term of $\th$.
For convenience, we assume that $\th$ is uniformly distributed in $[0,1]$ for all \eus\footnote{{This assumption is commonly used in the existing literature. Relaxing to more general distributions often does not change the main insights \cite{manshaei2008evolution,shetty2010congestion}.}}.



Let $\Probb$, $\Proba$, and $\Probl$ denote the fraction of {\eus} choosing the basic service, the advanced service, and the leasing service, respectively.
For convenience, we refer to $\Probb$, $\Proba$, and $\Probl$ as the \emph{market shares} of the basic service, the advanced service, and the leasing service, respectively.
Obviously, $\Probb, \Proba, \Probl \geq 0$ and $\Probb + \Proba + \Probl = 1$.
Then, the {normalized} \emph{payoffs} (profits) of the spectrum licensee and the database are, respectively,
\begin{equation}\label{eq:u1}
\left\{
\begin{aligned}
\Usl \eq \Uslrs &= \pl   \Probl   (1 - \delta),
\\
\Udb \eq \Udbrs &= \pa   \Proba + \pl   \Probl   \delta.
\end{aligned}
\right.
\end{equation}

 \subsection{Positive and Negative Network Externalities}\label{sec:model-externality}
 
There are two types of network externalities coexisting in the information market: (i) \emph{negative externality}, which corresponds to the increasing level of congestion and degradation of user performance due to more users sharing the same TV white space, and (ii) \emph{positive externality}, which is due to the quality increase of the (advanced) information when more users purchasing the information.
Next we analytically quantify these two network externalities.

We first have the following intuitive observations for a user's expected utilities of three strategy choices:

\vspace{-1mm}
$\bullet$
\emph{$\RL$ is a constant and independent of $\Proba $, $\Probb$, and $ \Probl $.}
This is because a user uses the licensed channels in an exclusive manner, hence its performance (on licensed channels) does not depend on the activities of others.

\vspace{-1mm}
$\bullet$
\emph{$\RB$ is non-increasing in $\Proba + \Probb$} (the total fraction of users using TV white space) due to the congestion effect.
This is because more users using TV white spaces (in a shared manner) will increase the level of congestion on these channel, hence reduce the performance of each user. 
 
 \vspace{-1mm}
$\bullet$
\emph{$\RA$ is non-increasing in $\Proba + \Probb$, due to the congestion effect} (similar as $\RB$).
This is referred to as the \textbf{negative network externality} of the information market.

\vspace{-1mm}
$\bullet$
\emph{$\RA$ is non-decreasing in $\Proba$, given a fixed value of $\Proba + \Probb $}.
This is because more users purchasing the advanced information will increase the quality of the information.
This is referred to as the \textbf{positive network externality}.

For convenience, we write $\RB$ as a non-increasing function $\fx(\cdot)$ of  $ \Proba + \Probb $ (or equivalently, $1 - \Probl $), i.e.,
$$
\textstyle
\RB \triangleq \fx(\Proba + \Probb),
$$
and write $\RA$ as the combination of a
non-increasing function $\fx(\cdot)$  of $ \Proba + \Probb $ and a non-decreasing function  $\gy(\cdot)$  of $\Proba $, i.e.,
$$
\RA \triangleq \fx(\Proba + \Probb) + \gy(\Proba).
$$
Note that $\fx (\cdot)$ reflects the congestion effect in the information marekt, and is identical in $\RB $ and  $\RA$ (as users experience the same congestion effect in both basic and advanced services), and  $\gy(\cdot)$  reflects the performance gain induced by the advanced information, i.e., the value of advanced information.

Since there is no congestion on the licensed channels, it is reasonable to assume that $\RL>\RA $ and $\RL>\RB $.
We can further assume that $\RA>\RB$, that is, the additional gain $\gy(\Proba) $ from the advanced information is positive.\footnote{{Note that if we assume  $\RL < \RB$, then users will never choose the leasing service even with a zero channel price $\pl$.
In this case, our model degenerates to the pure information market, similar as that in \cite{luo2014wiopt}.
Moreover, if $\RA = \RB$, then users will never choose the advanced service even with a zero information price $\pa$.
In this case, our model degenerates to a monopoly spectrum market (where the licensee is the monopolist).
In this sense, our hybrid market model generalizes both the pure spectrum market and pure information market.}} 
To facilitate the later analysis, we further introduce the following assumptions on functions $\fx (\cdot)$ and $\gy (\cdot)$.
\begin{assumption}\label{assum:congestion}
  $\fx(\cdot)$ is non-negative, non-increasing, convex, and continuously differentiable.
\end{assumption}
\begin{assumption}\label{assum:positive}
  $\gy(\cdot)$ is non-negative, non-decreasing, concave, and continuously differentiable.
\end{assumption}

\rev{
The non-increasing and convexity assumption of $\fx(\cdot)$ reflects the increasing of marginal performance degradation under congestion, and is widely used in wireless networks with congestion effect (see, e.g., \cite{shetty2010congestion,johari2010congestion} and references therein).
The non-decreasing and concavity assumption of $\gy(\cdot)$ reflects the diminishing of marginal performance improvement induced by the advanced information.
In this work, we use the generic functions $\fx(\cdot)$ and $\gy(\cdot)$, which can generalize many practical scenarios with the explicit advanced information definition (e.g., those proposed by Luo \emph{et al.} in \cite{luo2014wiopt, luo2014SDP}, where the advanced information is the interference level on each channel). 
We provide more detailed discussion about generic functions and practical scenarios in the Appendix of \cite{report}. 
}

\begin{figure}[t]
\centering
\footnotesize
\begin{tabular}{|m{3in}|}
\hline
\textbf{\revh{Layer I}: Commission Negotiation}
\\
\hline
The database and the spectrum licensee negotiate the commission charge details (i.e.,
$\delta $ under RSS).
\\
\hline
\multicolumn{1}{c}{$\Downarrow$} \\
\hline
\textbf{\revh{Layer II}: Price Competition Game}
\\
\hline
The database determines the information price $\pa$;
\\
The spectrum licensee determines the channel price $\pl$.
\\
\hline
\multicolumn{1}{c}{$\Downarrow$} \\
\hline
\textbf{\revh{Layer III}: User Behaving and Market Dynamics}
\\
\hline
The unlicensed users determine and update their best choices;
\\
The market dynamically evolves to the equilibrium point.
\\
\hline
\end{tabular}
\caption{Three-layer Hierarchical Interaction Model}
\label{fig:layer}
\vspace{-3mm}
\end{figure}

\subsection{Three-Layer Interaction Model}

%

Based on the above discussion, a hybrid spectrum and information market involves the interactions among the geo-location database, the spectrum licensee, and the unlicensed users.
Hence, we formulate the interactions as a three-layer hierarchical model illustrated in Figure \ref{fig:layer}.

Specifically, in Layer I, the database and the spectrum  licensee negotiate the commission charge details (regarding the spectrum market platform), i.e., the revenue sharing factor $\delta \in[0,1] $.
In Layer II, the database and the spectrum licensee compete with each other to attract unlicensed users. The database determines the price $\pa$ of the advanced information, and the spectrum licensee determines the price $\pl$ of the licensed channel.
In Layer III, the unlicensed users determine their best choices, and dynamically update their choices based on the current market shares. Accordingly, the market dynamically involves and finally reaches the equilibrium point.

In the following sections, we will analyze this three-layer interaction model systematically using backward induction.

\section{Layer III -- User Behavior and Market Equilibrium}
\label{sec:user_subscript}

In this section, we study the user behavior and market dynamics in Layer III, given the database's information price $\pa$ and the licensee's channel price $\pl$ (in Layer II).
In the following, we first discuss the {\eu}'s best choice, and show how the user behavior dynamically evolves, and how the market converges to an equilibrium point.

 \subsection{{\Eu}'s Best Strategy}\label{sec:user-best}

Now we study the best strategy of   {\eus}, given the prices $\{\pl, \pa\}$ and the initial market state $\{\Probl^0, \Proba^0, \Probb^0\}$ where $\Probb^0 + \Proba^0 + \Probl^0 = 1$.
Notice that each {\eu} will choose a strategy that maximizes its payoff defined in (\ref{eq:utility-basic}).
Hence, for a type-$\th$ {\eu}, its best strategy is
\footnote{{Here, ``iff'' stands for ``if and only if''.
Note that we omit the case of $\th \cdot \RL -  \pl = \max\{ \th \cdot \RA  -  \pa ,\ \th \cdot \RB\}$, $\th \cdot \RA -  \pa = \max\{ \th \cdot \RL  -  \pl ,\ \th \cdot \RB\}$, and $\th \cdot \RB  = \max\{ \th \cdot \RL - \pl, \th \cdot \RA  -  \pa \}$, which are negligible (i.e., occurring with zero probability) due to the continuous distribution of $\th$.}}~~~~
\begin{equation}
\left\{
\begin{aligned}
\l_{\th}^* = \LS, & \mbox{~~~~iff~~} \th \cdot \RL -  \pl > \max\{ \th \cdot \RA  -  \pa ,\ \th \cdot \RB\}
\\
\l_{\th}^* = \A, & \mbox{~~~~iff~~} \th \cdot \RA  -  \pa >  \max\{ \th \cdot \RL  -  \pl ,\
 \th \cdot \RB\}
\\
\l_{\th}^* = \B, & \mbox{~~~~iff~~} \th \cdot \RB > \max\{ \th \cdot \RL  -  \pl, \ \th \cdot \RA -  \pa \}
\end{aligned}
\right.
\end{equation}
where $\RB =  \fx(1- \Probl^0)$, and $\RA = \fx(1- \Probl^0) + \gy(\Proba^0)$.

To better illustrate the above best strategy, we introduce the following notations:
\begin{equation*}\label{eq:p-thres}
\textstyle
\thlb \eq \frac{ \pl}{ \RL- \RB },
~~~~
\thab \eq \frac{ \pa}{ \RA- \RB },
~~~~
\thla \eq \frac{\pl-\pa}{\RL - \RA}.
\end{equation*}
Intuitively, $\thlb$ denotes the smallest $\th$ such that a type-$\th$ \eu~prefers the leasing service than the basic service; $\thab$ denotes the smallest $\th$ such that a type-$\th$ \eu~prefers the advanced service than the basic service; and $\thla$ denotes the smallest $\th$ such that a type-$\th$ \eu~prefers the leasing service than the advanced service.
Notice that $\RA$ and $\RB$ are functions of initial market shares $\{\Probl^0, \Proba^0, \Probb^0\}$. Hence, $\thlb $, $\thab$, and $\thla$ are also functions of $\{\Probl^0, \Proba^0, \Probb^0\}$.

\begin{figure}
\centering
  \includegraphics[width=2.6in]{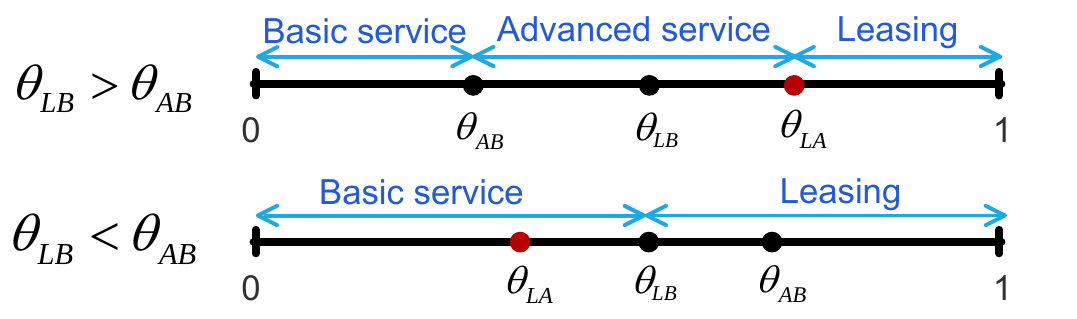}
  \vspace{-2mm}
  \caption{Illustration of $\thlb$, $\thab$, and $\thla$.}\label{fig:threshold}
  \vspace{-4mm}
\end{figure}


\rev{
Figure \ref{fig:threshold} illustrates the relationship of $\thlb$, $\thab$, and $\thla$.
Intuitively, Figure \ref{fig:threshold} implies that the users with a high utility evaluation factor $\th$ are more willing to choose the leasing service in order to achieve a large utility.
The users with a low utility evaluation factor $\th$ are more willing to choose the basic service so that they will pay zero service cost.
The users with a middle utility evaluation factor $\th$ are willing to choose the advanced service, in order to achieve a relatively large utility with a relatively low service cost.
Notice that when the information price $\pa$ is high or the information value (i.e., $\RA-\RB$) is low, we could have $\thlb < \thab$, in which no users will choose the advanced service (as illustrated in the lower subfigure of Figure \ref{fig:threshold}).
}


Next we characterize the new market shares (called the \emph{derived market shares}) resulting from the users' best choices mentioned above.
Such derived market shares are important for analyzing the user behavior dynamics and market evolutions in the next subsection.
Recall that $\th$ is uniformly distributed in $[0,1]$.
Then, given any initial market shares $\{\Probl^0, \Proba^0 \}$, the newly derived market shares $\{ \Probl,\Proba \}$ are
\begin{itemize}
\item
If $\thlb > \thab$, then $\Probl = 1 -  \thla$ and $\Proba  = \thla - \thab$;
\item
If $\thlb \leq \thab$, then $\Probl = 1 - \thlb$ and $\Proba  = 0$.
\end{itemize}

Formally, we have the following derived market shares.
\begin{lemma}\label{lemma:market-share}
Given any initial market shares $\Probl^0$ and $\Proba^0$, the derived market shares  $\Probl$ and $\Proba$ are given by
\begin{equation}\label{eq:user-prob-1}
\textstyle
\left\{
\begin{aligned}
\Probl  & \textstyle
= \max \big\{ 1 - \max\{ \thla, \thlb \} ,\ 0  \big\},\\
\Proba  & \textstyle
= \max \big\{ \min\{ \thla ,1 \} - \thab ,\ 0  \big\}.
\end{aligned}
\right.
\end{equation}
\end{lemma}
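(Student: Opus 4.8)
The plan is to read off the users' choices from the geometry of the three payoff lines and then convert the optimal regions into Lebesgue measures, exploiting that $\th$ is uniform on $[0,1]$. First I would note that, by \eqref{eq:utility-basic}, the three payoffs $\Ueu_{\th}$ are affine in $\th$ with slopes $\RB < \RA < \RL$ and intercepts $0$, $-\pa$, $-\pl$. A type-$\th$ user selects the service whose line is highest at $\th$, so the market share of each service equals the length of the set of $\th \in [0,1]$ on which its line attains the pointwise maximum of the three. Because a pointwise maximum of affine functions is convex and its active slope is non-decreasing in $\th$, the maximizing service can only switch in the order $\B \to \A \to \LS$ as $\th$ increases; hence each service is optimal on a (possibly empty) subinterval of $[0,1]$, and the only candidate breakpoints are the pairwise break-even thresholds $\thlb$, $\thab$, $\thla$.

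The crux is to pin down when the advanced service ever appears on the upper envelope, and for this I would evaluate the advanced line at $\th=\thlb$, the point where the basic and leasing lines cross. A direct computation gives, at $\th=\thlb$, that $\th\RA-\pa-\th\RB = (\RA-\RB)(\thlb-\thab)$ and $\th\RA-\pa-(\th\RL-\pl) = (\RL-\RA)(\thla-\thlb)$. Since the basic and leasing lines coincide at $\thlb$, the advanced line lies strictly above their common value iff $\thlb>\thab$ and iff $\thlb<\thla$ simultaneously; these two inequalities are therefore equivalent, which shows that $\thlb$ is always the \emph{middle} one of the three thresholds (or all three coincide). In particular, the advanced line is on the envelope for some $\th$ iff $\thab<\thlb<\thla$, i.e. iff $\thlb>\thab$ --- precisely the case split stated just before the lemma.

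With this equivalence in hand, the remaining step is bookkeeping. In the case $\thlb>\thab$ we have $\thab<\thlb<\thla$, so the pairwise comparisons place $\B$ optimal on $[0,\thab)$, $\A$ on $(\thab,\thla)$, and $\LS$ on $(\thla,1]$; truncating to $[0,1]$ (and using $\thab\ge 0$, which holds since $\pa\ge 0$ and $\RA>\RB$) yields $\Probl=\max\{1-\thla,\,0\}$ and $\Proba=\max\{\min\{\thla,1\}-\thab,\,0\}$. In the complementary case $\thlb\le\thab$ we get $\thla\le\thlb\le\thab$, the advanced interval $(\thab,\thla)$ is empty, only $\B$ (on $[0,\thlb)$) and $\LS$ (on $(\thlb,1]$) survive, and so $\Proba=0$ and $\Probl=\max\{1-\thlb,\,0\}$. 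Finally I would verify that the single formula \eqref{eq:user-prob-1} reproduces both cases: since $\max\{\thla,\thlb\}$ equals $\thla$ in the first case and $\thlb$ in the second, the expression for $\Probl$ matches; and since $\min\{\thla,1\}-\thab\le 0$ whenever $\thla\le\thab$, the expression for $\Proba$ collapses to $0$ in the second case.

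The main obstacle is the equivalence $\thlb>\thab \Leftrightarrow \thlb<\thla$ established in the second paragraph: without it the two bullet cases preceding the lemma would not obviously be exhaustive (one must rule out configurations such as $\thab<\thlb$ together with $\thla<\thlb$), and it is precisely this fact that both justifies the case split and guarantees that the compact $\max/\min$ formula in \eqref{eq:user-prob-1} remains valid in every regime, including the boundary situations where one or more thresholds fall outside $[0,1]$ or where $\thla$ is negative.
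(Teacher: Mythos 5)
Your proof is correct and follows essentially the same route as the paper: partition $[0,1]$ by the indifference thresholds $\thlb$, $\thab$, $\thla$, and read off each market share as the length of the sub-interval of $[0,1]$ on which the corresponding payoff line is the upper envelope, using the uniform distribution of $\th$. The only substantive addition is your explicit proof that $\thlb > \thab \Leftrightarrow \thlb < \thla$ (so $\thlb$ is always the middle threshold), a fact the paper asserts only implicitly through Figure \ref{fig:threshold} and the two bullet cases preceding the lemma; this is a welcome tightening but not a different method.
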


The results in Lemma \ref{lemma:market-share} assume that all \eus~update the best strategies once and simultaneously.
Since $\thlb$, $\thab$, and $\thla$ are functions of initial market shares $\{\Probl^0, \Proba^0\}$, the derived market shares $\{\Probl ,\Proba \}$  are also functions of  $\{\Probl^0,\Proba^0\}$, and hence can be written as  $\Probl(\Probl^0, \Proba^0)$ and $\Proba(\Probl^0, \Proba^0)$.

\subsection{{Market Dynamics and Equilibrium}}

When the market shares change,
the users' payoffs (on the advanced service and basic service) change accordingly, as $\RA$ and $\RB$ change.
As a result, users will update their best strategies continuously, hence the market shares will evolve dynamically, until reaching a {stable} point (called \emph{market equilibrium}).
In this subsection, we will study such a market dynamics and equilibrium, given the prices $\{ \pl, \pa \}$.~~~~~~~~~~~~~~~~~~~

For convenience, we introduce a virtual time-discrete system with slots $\t=1,2,\ldots$, where {\eu}s change their decisions at the beginning of every slot, based on the derived market shares in the previous slot.
Let $( \Probl^{\t}, \Proba^{\t} )$ denote the market shares derived at the end of slot $t$ (which serve as the initial market shares in the next slot $t+1$).
We further denote $\triangle \Probl  $ and $\triangle \Proba $ as the changes (dynamics) of market shares between two successive time slots, e.g., $\t$ and $\t+1$, that is,
\begin{equation}\label{eq:user-prob-diff}
\begin{aligned}
\textstyle
\triangle \Probl(\Probl^{\t}, \Proba^{\t}) &= \Probl^{\t+1} - \Probl^{\t }, ~
\triangle \Proba(\Probl^{\t}, \Proba^{\t}) &= \Proba^{\t+1} - \Proba^{\t },
\end{aligned}
\end{equation}
where $( \Probl^{\t+1}, \Proba^{\t+1} )$ are the derived market share in slot $\t+1$, which can be computed by Lemma \ref{lemma:market-share}.
Obviously, if both $\triangle \Probl $ and $\triangle \Proba $ are zero in  a slot $\t+1$, i.e., $\Probl^{\t+1} = \Probl^{\t} $ and $\Proba^{\t+1} = \Proba^{\t} $, then users will no longer change their strategies in the future. This implies that the market achieves a stable state, which we call the \emph{market equilibrium}.
Formally,

\begin{definition}[Market Equilibrium]\label{def:stable-pt}
A pair of market shares $\BProb^{*} = \{ \Probl^{*}, \Proba^{*} \}$ is a market equilibrium, if and only if
\begin{equation} \label{eq:market_equilibrium}
\textstyle
\triangle \Probl (\Probl^*, \Proba^*) = 0, \mbox{~~~and~~~}
\triangle \Proba (\Probl^*, \Proba^*) = 0.
\end{equation}
\end{definition}

Next, we study the existence and uniqueness of the market equilibrium, and further characterize the market equilibrium analytically.
These results are very important for analyzing the price competition game in Layer II (Section \ref{sec:layer2}).
\begin{proposition}[Existence]\label{lemma:existence-eq_pt}
Given any feasible price pair $( \pl, \pa)$,
there exists at least one market equilibrium.
\end{proposition}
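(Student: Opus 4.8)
The plan is to recast a market equilibrium as a fixed point of the derived--market--share map and then invoke Brouwer's fixed-point theorem. By Definition \ref{def:stable-pt} together with (\ref{eq:user-prob-diff}), the equilibrium conditions $\triangle \Probl = \triangle \Proba = 0$ are equivalent to $\Probl(\Probl^*,\Proba^*) = \Probl^*$ and $\Proba(\Probl^*,\Proba^*) = \Proba^*$, i.e., the initial shares reproduce themselves after one round of simultaneous best-response updating. Writing $\Phi(\Probl^0,\Proba^0) \triangleq \big(\Probl(\Probl^0,\Proba^0),\,\Proba(\Probl^0,\Proba^0)\big)$ for the update map furnished by Lemma \ref{lemma:market-share}, a market equilibrium is exactly a fixed point of $\Phi$ on the feasible set $\Delta \triangleq \{(\Probl^0,\Proba^0):\Probl^0,\Proba^0\ge 0,\ \Probl^0+\Proba^0\le 1\}$, which is a compact and convex subset of the plane.

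First I would check that $\Phi$ maps $\Delta$ into itself. Nonnegativity of both components is immediate from the outer $\max\{\cdot,0\}$ in (\ref{eq:user-prob-1}). For $\Probl+\Proba\le 1$, when $\RA=\RB$ we have $\thab=+\infty$, so $\Proba=0$ and the bound is trivial; when $\RA>\RB$ I would use the identity $(\thab-\thlb)(\RA-\RB)=(\thlb-\thla)(\RL-\RA)$, which (since $\RA-\RB>0$ and $\RL-\RA>0$) forces $\thab<\thlb<\thla$ in the regime $\thlb>\thab$ and $\thla\le\thlb\le\thab$ otherwise. In the first regime $\Probl+\Proba = 1-\thab\le 1$ (using $\thab\ge 0$), while in the second $\Proba = 0$ and $\Probl = \max\{1-\thlb,0\}\le 1$; the truncations $\min\{\thla,1\}$ and $\max\{\cdot,0\}$ only shrink these sums, so $\Phi(\Delta)\subseteq\Delta$.

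The crux is continuity of $\Phi$ on $\Delta$. Through $\RB=\fx(1-\Probl^0)$ and $\RA=\fx(1-\Probl^0)+\gy(\Proba^0)$, Assumptions \ref{assum:congestion}--\ref{assum:positive} make $\RB,\RA$ continuous in $(\Probl^0,\Proba^0)$. The denominators $\RL-\RB$ and $\RL-\RA$ are continuous and, by the standing assumptions $\RL>\RB$ and $\RL>\RA$ together with compactness of $\Delta$, attain positive minima, so $\thlb$ and $\thla$ are continuous on all of $\Delta$. The only delicate point is $\thab = \pa/(\RA-\RB) = \pa/\gy(\Proba^0)$, whose denominator can vanish on the face $\Proba^0=0$ when $\gy(0)=0$. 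There $\thab\to+\infty$, but this blow-up is absorbed by the truncations: $\Proba = \max\{\min\{\thla,1\}-\thab,0\}\to 0$, while $\thab$ does not enter the $\Probl$-component and $\thla$ stays finite, so both components extend continuously (with value $\Proba=0$) across that face. Hence $\Phi$ is continuous on $\Delta$.

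With $\Phi:\Delta\to\Delta$ continuous on the compact convex set $\Delta$, Brouwer's fixed-point theorem yields a fixed point $(\Probl^*,\Proba^*)$, which is the desired market equilibrium. I expect the self-map verification to be routine once the threshold ordering is in hand; the genuine obstacle is establishing continuity uniformly, in particular showing that the spurious blow-up where $\RA-\RB\to 0$ and $\thab\to\infty$ is harmless rather than breaking the argument.
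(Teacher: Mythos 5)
Your proof is correct. Recasting the market equilibrium as a fixed point of the Lemma~\ref{lemma:market-share} update map $\Phi$ on the simplex and invoking Brouwer is the natural argument here, and your two key verifications are sound: the self-map property via the identity $(\thab-\thlb)(\RA-\RB)=(\thlb-\thla)(\RL-\RA)$, and continuity including the delicate face $\Proba^0=0$ where $\thab=\pa/\gy(\Proba^0)$ blows up but is absorbed by the truncation $\max\{\min\{\thla,1\}-\thab,\,0\}\to 0$. The paper itself does not print a proof of this proposition (it is deferred to the online technical report), so a line-by-line comparison is not possible, but a fixed-point argument of exactly this kind is what the statement requires, and yours is self-contained. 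One minor loose end: your claim that $\thab\to+\infty$ on the face $\Proba^0=0$ presumes $\pa>0$; in the degenerate case $\pa=0$ with $\gy(0)=0$ the ratio is $0/0$, and you need a one-line convention (e.g., set $\thab=0$ there, since the advanced service is then free and weakly dominant over the basic service) so that $\Phi$ remains well defined and continuous; this does not affect the substance of the argument.
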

\begin{proposition}[Uniqueness]\label{lemma:uniqueness-eq_pt}
Given any feasible price pair $( \pl, \pa)$,
there exists a unique market equilibrium $(\Probl^*, \Proba^*)$, if there exists a tuple $(\Probl, \Proba)$ with $\Probl + \Proba \leq 1$ such that\footnote{{Here, $\gy^{\prime}(\Proba)$ is the first-order derivative of $\gy(\cdot)$ with respect to $\Proba$. Note that $\RA$ is a function of $\Proba$ and $\Probl$, and $\RB$ is a function of $\Probl$.}}
\begin{equation}\label{eq:stable_condition}
\textstyle
{ \frac{ \gy^{\prime}(\Proba) }{ \gy(\Proba)  } \cdot \frac{ \RL - \RB }{ \RL - \RA }  \leq 1.}
\end{equation}
\end{proposition}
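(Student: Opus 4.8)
The plan is to recast a market equilibrium as a fixed point of the map $T$ that sends an initial state $(\Probl,\Proba)$ to the derived shares in \eqref{eq:user-prob-1}; by Definition~\ref{def:stable-pt} the equilibria are exactly the fixed points of $T$, and Proposition~\ref{lemma:existence-eq_pt} already supplies at least one. It therefore suffices to prove that $T$ has at most one fixed point under \eqref{eq:stable_condition}. I would split along the two branches of Lemma~\ref{lemma:market-share}. In the degenerate branch $\thlb\le\thab$ we have $\Proba^{*}=0$ and only the basic/leasing margin is active, so a fixed point solves $\Probl=1-\thlb=1-\pl/(\RL-\fx(1-\Probl))$; since $\fx$ is non-increasing, the right-hand side is non-increasing in $\Probl$, hence $\Probl-(1-\thlb)$ is strictly increasing and the root is unique — notably with no appeal to \eqref{eq:stable_condition}, consistent with the positive externality being absent when nobody buys information.

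For the active branch $\thlb>\thab$ (so $\Proba^{*}>0$) I would eliminate $\Probl$ and reduce to a scalar problem. Writing $w=1-\Probl$ for the total white-space share, the leasing-indifference condition $w=\thla=(\pl-\pa)/(\RL-\fx(w)-\gy(\Proba))$ has, for each fixed $\Proba$, a unique solution $w=W(\Proba)$: the left-hand side is increasing in $w$ while the right-hand side is non-increasing in $w$ (because $\fx$ is non-increasing), so they cross once. Substituting $w=W(\Proba)$ into the advanced-indifference relation collapses the fixed-point system to the single equation $\Proba=\mathcal{A}(\Proba)$ with $\mathcal{A}(\Proba)\triangleq W(\Proba)-\pa/\gy(\Proba)$, and it remains to show $\mathcal{A}$ meets the diagonal at most once.

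The crux is a slope bound. Implicit differentiation of the relation defining $W$ gives $W^{\prime}(\Proba)=W\gy^{\prime}(\Proba)/\big((\RL-\RA)-W\fx^{\prime}(W)\big)$, and since $\fx^{\prime}\le 0$ the congestion term only enlarges the denominator, so $0\le W^{\prime}(\Proba)\le W\gy^{\prime}(\Proba)/(\RL-\RA)$. Hence $\mathcal{A}^{\prime}(\Proba)=W^{\prime}(\Proba)+\pa\gy^{\prime}(\Proba)/\gy(\Proba)^{2}\le \gy^{\prime}(\Proba)\big(W/(\RL-\RA)+\pa/\gy(\Proba)^{2}\big)$. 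Using the admissible-range facts $W\le 1$ and $\pa/\gy(\Proba)=\thab\le 1$ (the latter because $\Proba=W-\thab\ge 0$ forces $\thab\le W\le 1$ on this branch), the bracket is at most $1/(\RL-\RA)+1/\gy(\Proba)$, and since $\RL-\RB=(\RL-\RA)+\gy(\Proba)$ this yields $\mathcal{A}^{\prime}(\Proba)\le \frac{\gy^{\prime}(\Proba)}{\gy(\Proba)}\cdot\frac{\RL-\RB}{\RL-\RA}$. Condition \eqref{eq:stable_condition} then forces $\mathcal{A}^{\prime}\le 1$, so $\Proba-\mathcal{A}(\Proba)$ is non-decreasing (strictly wherever the inequalities are strict), and $\mathcal{A}$ crosses the diagonal only once.

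Finally I would glue the pieces: each branch yields at most one candidate, and because the derived shares in \eqref{eq:user-prob-1} depend continuously on $(\Probl,\Proba)$ across the boundary $\thlb=\thab$, a fixed point in one branch precludes one in the other, so together with existence the equilibrium is unique. I expect the main obstacle to be exactly this gluing combined with the slope estimate: the congestion map $\fx$ is stabilizing and must be cleanly separated from the destabilizing positive externality $\gy$ (which is why \eqref{eq:stable_condition} involves only $\gy,\gy^{\prime}$ and the utility gaps, not $\fx^{\prime}$), and the non-smoothness introduced by the $\max/\min$ clamping in \eqref{eq:user-prob-1} at the boundaries $\Probl=0$, $\Proba=0$, and $\Probl+\Proba=1$ must be absorbed by arguing monotonicity of the gap function piecewise and invoking continuity to join the pieces.
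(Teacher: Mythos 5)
Your within-branch analysis is sound, and your slope computation is in fact exactly the calculation that generates the stated hypothesis: the reduction to the scalar fixed-point equation $\Proba = W(\Proba) - \pa/\gy(\Proba)$, the implicit-differentiation bound $0 \leq W^{\prime}(\Proba) \leq W\gy^{\prime}(\Proba)/(\RL-\RA)$ (where convexity of $\fx$ is never needed, only $\fx^{\prime}\leq 0$), and the identity $(\RL-\RA)+\gy(\Proba) = \RL-\RB$ together recover precisely condition (\ref{eq:stable_condition}). (For calibration: the paper states this proposition without proof in the body and defers to its online technical report, so I can only judge your argument on its own merits.)

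The genuine gap is the final gluing step. You assert that ``because the derived shares in (\ref{eq:user-prob-1}) depend continuously on $(\Probl,\Proba)$ across the boundary $\thlb=\thab$, a fixed point in one branch precludes one in the other.'' Continuity gives no such exclusion, and this is not a technicality: coexistence of a fixed point with $\Proba^{*}=0$ and one with $\Proba^{*}>0$ is exactly the multiplicity that the positive externality creates --- the self-fulfilling pair ``nobody buys, the information is worth only $\gy(0)$, so nobody wants to buy'' versus ``many buy, the information is worth $\gy(\Proba^{*})\gg\gy(0)$, so many want to buy.'' For instance, if $\gy(0)=0$ and $\pa>0$, the degenerate branch always admits a fixed point (since $\thab|_{\Proba=0}=+\infty$), while a second, active fixed point can coexist whenever $\gy$ rises quickly from zero; what rules this out is that $\gy^{\prime}/\gy$ blows up near $\Proba=0$, so (\ref{eq:stable_condition}) fails there. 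In other words, cross-branch exclusion must itself invoke (\ref{eq:stable_condition}) near $\Proba=0$ (or a global monotonicity/contraction property of the full clamped map $T$); it cannot be obtained for free from continuity. Two related loose ends: your slope bound needs (\ref{eq:stable_condition}) to hold along the entire curve $(1-W(\Proba),\Proba)$ (equivalently, at all candidate equilibria), i.e., the ``for all tuples'' reading of the hypothesis, whereas the proposition literally says ``there exists a tuple'' --- you silently adopt the stronger (surely intended) reading, and should say so, since under the literal reading your argument, and indeed the statement, would fail; and the clamped fixed points (e.g., $\Probl^{*}=0$ when $\thla\geq 1$, where the relation $w=\thla$ defining $W$ degenerates to an inequality) are acknowledged at the end but never actually handled, and they require the same fixed-point treatment rather than an appeal to continuity ``absorbing'' them.
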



\begin{figure}
\centering
  \includegraphics[width=2.8in]{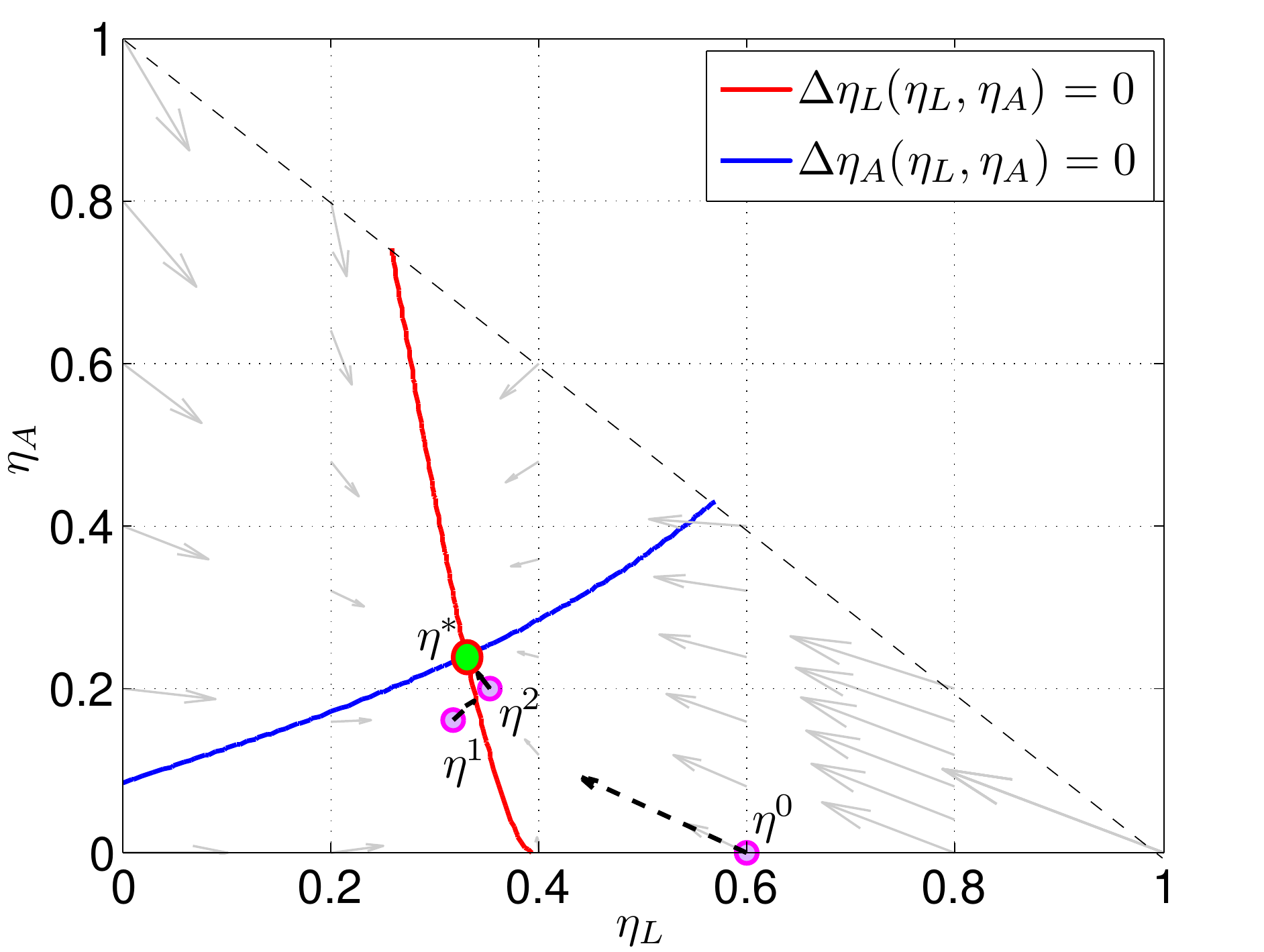}
  \caption{Illustration of Market Dynamics and Market Equilibrium. Red Curve:  $\triangle \Probl(\Probl, \Proba) = 0$; Blue Curve: $\triangle \Proba(\Probl, \Proba) = 0$. The intersection between   blue   and red curve is the market equilibrium.}
  \label{fig:dyna-users}
  \vspace{-6mm}
\end{figure}

A practical implication of (\ref{eq:stable_condition}) is that if the information value $\gy(\Proba)$ (positive network externality) does not increase very fast with $\Proba$, then there exists a unique equilibrium.
Note that the condition (\ref{eq:stable_condition}) is sufficient but not necessary for the uniqueness.
In particular, we observe through numerical simulations that in some cases, the market converges to a unique equilibrium for a wide range of prices under which the condition (\ref{eq:stable_condition}) is violated.
Nevertheless, the sufficient condition in (\ref{eq:stable_condition}) leads to the insight that if the impact of positive network externality is significant, there may exist multiple equilibrium points.
Note that even if there exist multiple equilibrium points,
the market always converges to a unique one of them, given the initial market shares. {Please refer to \cite{report} for more details.}


For a better understanding, we illustrate the dynamics of market shares in Figure \ref{fig:dyna-users}.
The $x$-axis denotes the leasing service's market share-$\Probl$, and the $y$-axis denotes the advanced service's market share-$\Proba$.
Notice that a feasible pair of market shares $\{\Probl, \Proba\}$ satisfies $\Probl + \Proba \leq 1$.
An arrow denotes the dynamics of market shares under a particular initial market shares (at the starting point of the arrow).
For example, {from the initial market shares $\BProb^{0} = \{\Probl = 0.6, \Proba = 0\}$, the  market will evolve to $\BProb^{1} = \{0.32, 0.16\}$, then $\BProb^{2} = \{0.35, 0.2\}$, and eventually converge to the equilibrium point $\BProb^{*} = \{0.33,0.24\}$.}
The red curve denotes the isoline of $\triangle \Probl(\Probl, \Proba) = 0$,  and the blue curve denotes the isoline of $\triangle \Proba(\Probl, \Proba) = 0$.
By Definition \ref{def:stable-pt}, the intersection between the blue curve and the red curve is the market equilibrium point.
In this example, there is a unique market equilibrium point.

Suppose the uniqueness condition (\ref{eq:stable_condition})  is satisfied. We characterize the unique equilibrium by the following theorem.
\begin{theorem}[Market Equilibrium]\label{thrm:stable-eq_pt}
Suppose the uniqueness condition (\ref{eq:stable_condition}) holds.
Then, for any feasible price pair $( \pl, \pa)$, the unique market equilibrium is given by
\begin{itemize}
\item[(a)]
If $ \left.\thlb(\Probl, \Proba)\right|_{\Probl = 0} \leq \left.\thab(\Probl, \Proba)\right|_{\Proba = 0}$, then there is a unique market equilibrium  $\BProb^{\dag} = \{ \Probl^{\dag}, \Proba^{\dag} \}$ given by
    \begin{equation}\label{eq:NE-pt-1}
 \textstyle  \Probl^{\dag} = 1 - \thlb(\Probl^{\dag},\Proba^{\dag}) , \mbox{~~~and~~~}
  \Proba^{\dag} = 0;
    \end{equation}

\item[(b)]
If $ \left.\thlb(\Probl, \Proba)\right|_{\Probl = 0} > \left.\thab(\Probl, \Proba)\right|_{\Proba = 0}$, then there is a unique market equilibrium  $\BProb^{*} = \{ \Probl^{*}, \Proba^{*} \}$ given by
    \begin{equation}\label{eq:NE-pt-22}
    \left\{
      \begin{aligned}
      &\textstyle  \Probl^{*} = 1 - \thla(\Probl^{*}, \Proba^{*}) , \\
      &\textstyle  \Proba^{*} = \thla(\Probl^{*}, \Proba^{*}) - \thab(\Probl^{*}, \Proba^{*}).
       \end{aligned}
    \right.
    \end{equation}
\end{itemize}
\end{theorem}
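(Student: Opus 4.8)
The plan is to recognize that, by Definition~\ref{def:stable-pt} and Lemma~\ref{lemma:market-share}, a market equilibrium is exactly a fixed point of the map $(\Probl^0,\Proba^0)\mapsto(\Probl,\Proba)$ defined in \eqref{eq:user-prob-1}. Since Proposition~\ref{lemma:existence-eq_pt} and Proposition~\ref{lemma:uniqueness-eq_pt} already guarantee that under \eqref{eq:stable_condition} this fixed point exists and is unique, the only remaining task is to (i) show that the fixed point must take one of the two closed forms \eqref{eq:NE-pt-1} or \eqref{eq:NE-pt-22}, and (ii) prove that the corner comparison $\thlb|_{\Probl=0}$ versus $\thab|_{\Proba=0}$ is what selects between them. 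First I would record two elementary structural facts. \emph{Monotonicity:} since $\RB=\fx(1-\Probl)$ with $\fx$ non-increasing, $\thlb=\pl/(\RL-\RB)$ is non-decreasing in $\Probl$; and since $\RA-\RB=\gy(\Proba)$ with $\gy$ non-decreasing, $\thab=\pa/\gy(\Proba)$ is non-increasing in $\Proba$. Hence $\thlb|_{\Probl=0}$ is the smallest value of $\thlb$ and $\thab|_{\Proba=0}$ the largest value of $\thab$ over the feasible simplex $\{\Probl+\Proba\le1\}$. \emph{Mediant identity:} writing $\RL-\RB=(\RL-\RA)+(\RA-\RB)$ and $\pl=(\pl-\pa)+\pa$ shows $\thlb$ is the mediant of $\thla$ and $\thab$, so $\thlb$ always lies between $\thla$ and $\thab$; in particular $\thlb>\thab\iff\thla>\thab$. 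This is what lets me translate the branch condition of Lemma~\ref{lemma:market-share} (advanced share positive iff $\thlb>\thab$) into the threshold comparison appearing in the theorem.

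Then case (b) is the clean direction. If $\thlb|_{\Probl=0}>\thab|_{\Proba=0}$, then by the monotonicity bounds $\thlb(\Probl)\ge\thlb|_{\Probl=0}>\thab|_{\Proba=0}\ge\thab(\Proba)$ for every feasible $(\Probl,\Proba)$, so $\thlb>\thab$ holds identically. By Lemma~\ref{lemma:market-share} the derived shares then always use the advanced branch $\Probl=1-\thla$, $\Proba=\thla-\thab$, so in particular the fixed point does, which is precisely \eqref{eq:NE-pt-22}; existence and uniqueness supply the rest. The contrapositive of the same inequality chain also disposes of half of case (a): at any candidate with $\Proba^*=0$ the no-advanced branch requires $\thlb(\Probl^*)\le\thab(0)$, but under condition (b) $\thlb(\Probl^*)\ge\thlb|_{\Probl=0}>\thab|_{\Proba=0}=\thab(0)$, a contradiction, so no no-advanced equilibrium can coexist with condition (b).

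The remaining and hardest direction is condition (a) $\Rightarrow$ form \eqref{eq:NE-pt-1}. Here monotonicity alone is insufficient, because at a hypothetical advanced-branch equilibrium one has $\thlb(\Probl^*)\ge\thlb|_{\Probl=0}$ and $\thab(\Proba^*)\le\thab|_{\Proba=0}$, i.e.\ both thresholds move in the direction \emph{consistent} with, rather than contradicting, $\thlb(\Probl^*)>\thab(\Proba^*)$. My plan is therefore constructive: solve $\Probl=1-\thlb(\Probl)$ for $\Probl^\dag$, which exists and is unique because $\Probl\mapsto\Probl-1+\thlb(\Probl)$ is continuous and strictly increasing (by the monotonicity above), and set $\Proba^\dag=0$. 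I would then verify that $(\Probl^\dag,0)$ is a genuine fixed point of \eqref{eq:user-prob-1}, which amounts to checking that no user deviates into the advanced service, i.e.\ $\thlb(\Probl^\dag)\le\thab(0)$; once this self-consistency is established, uniqueness (Proposition~\ref{lemma:uniqueness-eq_pt}) certifies $(\Probl^\dag,0)$ is \emph{the} equilibrium and rules out any advanced-branch equilibrium.

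I expect the verification of this last self-consistency step to be the main obstacle, precisely because the thresholds are evaluated at the very equilibrium they define, so the clean corner value $\thlb|_{\Probl=0}$ must be related to the true equilibrium value $\thlb(\Probl^\dag)$. My intended remedy is to invoke the uniqueness condition \eqref{eq:stable_condition}: it bounds the relative growth rate $\gy^{\prime}(\Proba)/\gy(\Proba)$ of the positive externality and thereby prevents the advanced service from ``igniting'' once it is absent. Concretely, I would run the dynamics from the boundary state $\Proba^0=0$ and use \eqref{eq:stable_condition} to show the advanced share stays at $0$ while $\Probl$ converges monotonically to $\Probl^\dag$, so that $\thlb$ never rises above $\thab(0)$ along the trajectory; combined with condition (a) this certifies $\thlb(\Probl^\dag)\le\thab(0)$ and closes the argument. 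Handling the corner/degenerate sub-cases (e.g.\ $\thlb\ge1$ giving $\Probl^\dag=0$, or $\thla>1$ so that the $\min\{\thla,1\}$ clipping in \eqref{eq:user-prob-1} is active) would be routine once the main contraction-type estimate is in place.
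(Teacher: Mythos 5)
Your overall skeleton (market equilibrium $=$ fixed point of the map \eqref{eq:user-prob-1}, existence/uniqueness outsourced to Propositions \ref{lemma:existence-eq_pt} and \ref{lemma:uniqueness-eq_pt}, then branch selection) matches the paper's proof, which simply substitutes \eqref{eq:NE-pt-1} or \eqref{eq:NE-pt-22} into \eqref{eq:user-prob-1} and ``checks'' condition \eqref{eq:market_equilibrium}, deferring details to a technical report. Your monotonicity facts and the mediant identity are correct, and your case (b) argument is complete and actually sharper than the paper's sketch: under $\thlb|_{\Probl=0} > \thab|_{\Proba=0}$ the inequality $\thlb > \thab$ holds on the whole simplex, so every fixed point must take the form \eqref{eq:NE-pt-22}.

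The genuine gap is in case (a), exactly where you flagged it, and your proposed remedy cannot close it, because condition \eqref{eq:stable_condition} controls the wrong externality. The obstruction to the self-consistency $\thlb(\Probl^{\dag}) \le \thab(0)$ is not the growth of $\gy$ but the congestion term $\fx$: as $\Probl$ rises toward $1-\thlb$, $\RB = \fx(1-\Probl)$ rises, hence $\thlb$ rises, and nothing in \eqref{eq:stable_condition} prevents it from crossing $\thab(0)$. Concretely, take $\fx(x) = 10(1-x)$, $\gy \equiv 1$, $\RL = 12$, $\pl = 2.4$, $\pa = 0.25$: all assumptions hold, \eqref{eq:stable_condition} holds trivially ($\gy' \equiv 0$, so there is no positive externality at all), and case (a)'s hypothesis holds since $\thlb|_{\Probl=0} = 0.2 \le 0.25 = \thab|_{\Proba=0}$. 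Yet the unique solution of $\Probl = 1 - \thlb(\Probl)$ is $\Probl^{\dag} = 0.6$, where $\thlb(\Probl^{\dag}) = 0.4 > 0.25$; at $(0.6,0)$ every user with $\th \in (0.25, 0.43)$ strictly prefers the advanced service, so $(\Probl^{\dag},0)$ is not a fixed point, your from-the-boundary dynamics ignite (the derived $\Proba$ jumps to $0.18$), and the unique market equilibrium is $(\Probl^{*},\Proba^{*}) \approx (0.584, 0.166)$, of form \eqref{eq:NE-pt-22} rather than \eqref{eq:NE-pt-1}. So no contraction estimate or dynamics argument can establish case (a) from the corner-evaluated hypothesis: the implication itself fails. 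What is actually needed is the hypothesis evaluated at the candidate equilibrium, $\thlb(\Probl^{\dag}) \le \thab|_{\Proba=0}$, which is precisely what the paper's substitute-and-verify proof silently assumes when the substitution lands in the second branch of Lemma \ref{lemma:market-share}. Your instinct that this step is the crux was right; the fix is to strengthen or reinterpret the hypothesis of case (a), not to appeal to \eqref{eq:stable_condition}.
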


\begin{proof}
First, we obtain the derived market shares by substituting the market shares given in (\ref{eq:NE-pt-1}) or (\ref{eq:NE-pt-22}) into (\ref{eq:user-prob-1}).
Then, we can check the above derived market shares satisfy the equilibrium condition (\ref{eq:market_equilibrium}).
For the detailed proof, please refer to \cite{report}.
\end{proof}




\vspace{-2mm}
\section{Layer II -- Price Competition Game Equilibrium}\label{sec:layer2}

In this section, we study the price competition between the database and the spectrum licensee in Layer II, given the commission negotiation solution in Layer I and based on the market equilibrium prediction in Layer III.
We will analyze the  game equilibrium under the revenue sharing scheme (RSS).
We first define the price competition game (PCG) explicitly.
\begin{itemize}
\item
\emph{Players:} The database and the spectrum licensee;
\item
\emph{Strategies:} The database's strategy is the price  $\pa$ of its advanced information, and the licensee's strategy is the price  $\pl$ of its licensed channels;
\item
\emph{Payoffs:} The payoffs of players are defined in (\ref{eq:u1}) under RSS.
\end{itemize}
For convenience, we write the (unique) market equilibrium $\BProb^{*} = \{ \Probl^{*}, \Proba^{*} \}$ in Layer III as functions of prices $(\pl, \pa)$, i.e., $\Probl^{*} (\pl, \pa)$ and $\Proba^{*} (\pl, \pa)$.
Intuitively, we can interpret $\Probl^{*} (\cdot)$ and $\Proba^{*} (\cdot)$ as the \emph{demand} functions of the licensee and the \db, respectively.


Assume that the licensee shares a fixed percentage $\delta \in [0, 1]$ of revenue with the database. Then, 
by (\ref{eq:u1}), the payoffs of the \lh~and the \db~can be written as:
\begin{equation}\label{eq:sl-profit-dynamic-rv}
\left\{
\begin{aligned}
\Uslrs(\pl , \pa) & = \pl \cdot \Probl^{*}(\pl, \pa) \cdot (1 - \delta),
\\
\Udbrs(\pl , \pa) &= \pa \cdot \Proba^{*}(\pl, \pa) + \pl \cdot \Probl^{*}(\pl, \pa) \cdot \delta.
\end{aligned}
\right.
\end{equation}


\begin{definition}[Nash Equilibrium]\label{def:nash}
A pair of prices $( \pl^{*}, \pa^{*} )$ is called a Nash equilibrium, if
\begin{equation}\label{eq:db-price-dynamic}
\left\{
\begin{aligned}
\textstyle\pl^{*} & = \arg \max_{\pl \geq 0}\ \Uslrs(\pl , \pa^{*}),
\\
\textstyle\pa^{*} & = \arg \max_{\pa \geq 0}\ \Udbrs(\pl^{*} , \pa).
\end{aligned}
\right.
\end{equation}
\end{definition}

It is notable that directly solving the Nash equilibrium is very challenging, due to the difficulty in analytically characterizing the market equilibrium $\{\Probl^{*}(\pl, \pa),\Proba^{*} (\pl, \pa) \}$ under a particular price pair $\{\pl, \pa\}$.
To this end, we transform the original price competition game (PCG) into an
equivalent \emph{market~share~competition game} (MSCG).
The key idea is to view the market share as the strategy of the database or the licensee, and the prices as functions of the market shares.

Specifically, we notice that under the uniqueness condition (\ref{eq:stable_condition}), there is a \emph{one-to-one} correspondence between the market equilibrium $\{\Probl^{*}, \Proba^* \}$ and the prices  $\{\pl, \pa\}$.

In this sense, once the \lh~and the \db~choose the prices $\{\pl, \pa\}$, they have equivalently chosen the market shares $\{\Probl^{*}, \Proba^* \}$.
Hence, we obtain the equivalent {market share competition game}---MSCG, where the strategy of each player is its market share (i.e., $\Probl$ for the licensee and $\Proba$ for the database), and the prices $\{\pl, \pa\}$ are functions of the market shares $\{\Probl , \Proba  \}$.
{Substitute $\thla=\frac{\pl-\pa}{\RL - \RA}$ and $\thab=\frac{ \pa}{ \RA- \RB }$ into (\ref{eq:NE-pt-22}), we can derive the inverse function of (\ref{eq:NE-pt-22}), where prices are functions of market shares, i.e.,\footnote{We omit the trivial case in (\ref{eq:NE-pt-1}), where the database has a zero market share, as this will never the case at the pricing equilibrium of Layer II.}}
\begin{equation}\label{eq:price-market-share-rs}
\textstyle
\left\{
  \begin{aligned}
  \textstyle  \pl(\Probl , \Proba )   = & ( 1 - \Probl ) \cdot \left( \RL - \fx(1-\Probl) - \gy(\Proba) \right)   \\
&  + ( 1 - \Probl - \Proba )\cdot \gy(\Probl) , \\
 \textstyle  \pa (\Probl , \Proba ) =& ( 1 - \Probl - \Proba )\cdot \gy(\Proba).
   \end{aligned}
   \right.
\end{equation}
Accordingly, the payoffs of two players can be written as:
\begin{equation}\label{eq:sl-profit-dynamic-rv-xx}
\left\{
\begin{aligned}
\textstyle
\Urslrs(\Probl , \Proba) & = \pl (\Probl, \Proba) \cdot \Probl \cdot (1 - \delta),
\\
\textstyle
\Urdbrs(\Probl , \Proba) & = \pa (\Probl , \Proba) \cdot \Probl + \pl (\Probl , \Proba) \cdot \Probl \cdot \delta.
\end{aligned}
\right.
\end{equation}
Similarly, a pair of market shares $(\Probl^*, \Proba^*)$ is called a Nash equilibrium of MSCG, if $\Probl^* = \arg \max_{\Probl} \Uslrs(\Probl , \Proba^*) $ and $\Proba^* = \arg \max_{\Proba} \Udbrs(\Probl^*, \Proba) $.

%
%
%
We first show that the equivalence between the original PCG and the above MSCG.
\begin{proposition}[Equivalence]\label{lemma:game_tranform}
If $\{\Probl^{*}, \Proba^*\}$ is a Nash equilibrium of MSCG, then $ \{\pl^*, \pa^*\}$ given by (\ref{eq:price-market-share-rs}) is a Nash equilibrium of the original price competition game PCG.
\end{proposition}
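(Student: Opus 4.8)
The plan is to build the equivalence on the one-to-one correspondence between feasible price profiles $(\pl,\pa)$ and feasible market-share profiles $(\Probl,\Proba)$ that holds under the uniqueness condition (\ref{eq:stable_condition}). Let $\Phi$ denote the map $(\Probl,\Proba)\mapsto(\pl,\pa)$ defined by (\ref{eq:price-market-share-rs}). By Proposition \ref{lemma:uniqueness-eq_pt} each price pair induces a unique market equilibrium, so the demand functions $\Probl^{*}(\cdot)$ and $\Proba^{*}(\cdot)$ are well defined and are precisely the inverse $\Phi^{-1}$. The first step is to record this bijection carefully on the feasible domain $\{\Probl+\Proba\le 1\}$, noting that $\Phi$ is continuous and, under Assumptions \ref{assum:congestion}--\ref{assum:positive}, differentiable.

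The second step is to verify that the two games carry the same payoff landscape under $\Phi$. Substituting $\Probl=\Probl^{*}(\pl,\pa)$ and $\Proba=\Proba^{*}(\pl,\pa)$ into (\ref{eq:sl-profit-dynamic-rv-xx}) and comparing with (\ref{eq:sl-profit-dynamic-rv}) yields the identities $\Urslrs(\Probl,\Proba)=\Uslrs(\Phi(\Probl,\Proba))$ and $\Urdbrs(\Probl,\Proba)=\Udbrs(\Phi(\Probl,\Proba))$; equivalently, a player's MSCG payoff at a share profile equals its PCG payoff at the corresponding price profile. This is a direct computation using $\pl\cdot\Probl^{*}(\Phi(\Probl,\Proba))=\pl(\Probl,\Proba)\cdot\Probl$ and the analogous identity for the database term.

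With these facts in hand, the third step takes a Nash equilibrium $(\Probl^{*},\Proba^{*})$ of the MSCG, sets $(\pl^{*},\pa^{*})=\Phi(\Probl^{*},\Proba^{*})$, and argues that neither player can profitably deviate in the PCG, i.e.\ that (\ref{eq:db-price-dynamic}) holds. For the licensee I would take an arbitrary deviation $\pl'\ge 0$ with $\pa^{*}$ fixed, map it back to a share profile $(\Probl',\Proba')=\Phi^{-1}(\pl',\pa^{*})$, use the payoff identity to write $\Uslrs(\pl',\pa^{*})=\Urslrs(\Probl',\Proba')$, and try to bound this by $\Urslrs(\Probl^{*},\Proba^{*})$ using MSCG optimality; the database deviation is symmetric.

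The main obstacle is precisely this last step, because $\Phi$ does not respect the product structure of the strategy spaces: holding the opponent's price $\pa^{*}$ fixed while varying $\pl$ traces a curve in $(\Probl,\Proba)$-space along which the opponent's share $\Proba$ also moves, so the back-mapped profile generally has $\Proba'\neq\Proba^{*}$. Hence the MSCG inequality $\Urslrs(\Probl^{*},\Proba^{*})\ge\Urslrs(\Probl,\Proba^{*})$, which only fixes the opponent's share, does not immediately dominate $\Urslrs(\Probl',\Proba')$. To close this gap I would reduce to first-order conditions at the equilibrium: differentiate the licensee's revenue along the $\pa=\pa^{*}$ demand curve, express the derivative through the Jacobian of $\Phi$ (via the inverse function theorem) in terms of the partials $\partial_{\Probl}\pl$, $\partial_{\Proba}\pl$, $\partial_{\Probl}\pa$, $\partial_{\Proba}\pa$ read off from (\ref{eq:price-market-share-rs}), and try to show, using Assumptions \ref{assum:congestion}--\ref{assum:positive} and the uniqueness condition, that the cross-terms coupling the two players do not move the stationary point, so that the two stationarity conditions coincide; a concavity or monotonicity argument along the curve would then upgrade the stationary point to a global maximizer. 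I expect that controlling this cross-coupling — reconciling a ``fix the opponent's price'' deviation in the PCG with a ``fix the opponent's share'' deviation in the MSCG — is the crux on which the entire equivalence hinges, and it is the part that genuinely requires the model's specific structure rather than the bijection alone.
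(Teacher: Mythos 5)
Your first two steps are sound and are the standard bookkeeping that any proof of this proposition must contain: under the uniqueness condition (\ref{eq:stable_condition}) the map $\Phi$ from shares to prices in (\ref{eq:price-market-share-rs}) is invertible with inverse given by the Layer-III demand functions, and substituting into (\ref{eq:sl-profit-dynamic-rv}) gives the payoff identities $\Uslrs(\Phi(\Probl,\Proba))=\Urslrs(\Probl,\Proba)$ and $\Udbrs(\Phi(\Probl,\Proba))=\Urdbrs(\Probl,\Proba)$. You have also correctly isolated the real content of the proposition: a unilateral price deviation (opponent's \emph{price} fixed) moves the state along a level curve of $\pa(\cdot,\cdot)$ in share space, along which the opponent's \emph{share} also moves, so the MSCG equilibrium inequality does not directly dominate the deviation payoff. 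Naming this Cournot-versus-Bertrand mismatch is the most valuable part of your write-up. (The paper states the proposition without proof in the main text, deferring to its technical report, so I judge your proposal on its own merits.)

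The problem is that your step three is a plan rather than an argument, and the specific plan you propose cannot work in this model. You hope to show that the cross-coupling terms ``do not move the stationary point,'' so that the MSCG and PCG first-order conditions coincide; they do not. Reading the last term of (\ref{eq:price-market-share-rs}) as $(1-\Probl-\Proba)\gy(\Proba)$ (which is what inverting (\ref{eq:NE-pt-22}) actually gives, so that $\pl(\Probl,\Proba)=(1-\Probl)\bigl(\RL-\fx(1-\Probl)\bigr)-\Proba\,\gy(\Proba)$ and $\pa(\Probl,\Proba)=(1-\Probl-\Proba)\gy(\Proba)$), one computes directly
\begin{equation*}
\frac{\partial \pa}{\partial \Probl}=-\gy(\Proba),
\qquad
\frac{\partial \pl}{\partial \Proba}=-\bigl(\gy(\Proba)+\Proba\,\gy'(\Proba)\bigr),
\end{equation*}
both strictly negative whenever $\gy(\Proba)>0$ (which holds at any candidate equilibrium with $\Proba^{*}>0$, the only case the paper considers). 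Writing the licensee's PCG payoff derivative through the Jacobian $J_{\Phi}$ via the inverse function theorem and inserting the licensee's MSCG stationarity condition $\pl+\Probl\,\partial\pl/\partial\Probl=0$, the residual derivative at the candidate point is
\begin{equation*}
\frac{\partial}{\partial \pl}\Bigl[\pl\,\Probl^{*}(\pl,\pa^{*})\Bigr]
=\Probl^{*}\cdot\frac{-\,\frac{\partial \pl}{\partial \Proba}\cdot\frac{\partial \pa}{\partial \Probl}}{\det J_{\Phi}}
=-\,\Probl^{*}\cdot\frac{\gy(\Proba^{*})\bigl(\gy(\Proba^{*})+\Proba^{*}\gy'(\Proba^{*})\bigr)}{\det J_{\Phi}},
\end{equation*}
which is nonzero at any interior MSCG equilibrium with $\Probl^{*}>0$. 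So the candidate price pair is not even a stationary point of the licensee's PCG payoff, and no concavity or monotonicity argument ``along the curve'' can upgrade a non-stationary point to a maximizer. In other words, the cross-coupling you flagged is not a technicality to be differentiated away; it is exactly where first-order reasoning breaks. Any complete proof must use a different mechanism: for instance, a global argument that tracks the sign of the induced change $\Proba^{*}(\pl',\pa^{*})-\Proba^{*}$ for each direction of deviation $\pl'\gtrless\pl^{*}$, combines it with the sign of $\partial\pl/\partial\Proba$ to bound the deviation payoff by the payoff of some MSCG-feasible deviation, and handles the database's deviation symmetrically. As written, your proposal identifies the crux but neither resolves it nor proposes a route that could resolve it.
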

We next show that the MSCG is a supermodular game (with minor strategy transformation), and then derive the Nash equilibrium using the supermodular game theory \cite{topkis1998supermodular}.
\begin{proposition}[Existence]\label{thrm:NE-existence}
The MSCG is a supermodular game with respect to $\Proba$ and $-\Probl$. Hence, there exists at least one Nash equilibrium $(\Probl^*, \Proba^*)$.
\end{proposition}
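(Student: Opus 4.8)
The plan is to invoke the standard existence theorem for supermodular games \cite{topkis1998supermodular}: it suffices to show that each player's strategy set is a compact lattice and that the two payoffs in (\ref{eq:sl-profit-dynamic-rv-xx}) exhibit increasing differences in the appropriate partial order. Since each player controls a single scalar ($\Probl$ for the licensee, $\Proba$ for the database) drawn from a compact interval, the strategy sets are one-dimensional compact lattices, so supermodularity of each payoff in its own variable is automatic; the only substantive requirement is the increasing-differences (cross-partial) condition linking the two players' strategies.

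First I would explain the role of the order reversal, which is exactly what the phrase ``with respect to $\Proba$ and $-\Probl$'' encodes. The natural cross-partials will turn out to be nonpositive rather than nonnegative, reflecting the strategic-substitute nature of the competition: a larger leasing share $\Probl$ depresses the database's marginal incentive to expand $\Proba$, and symmetrically. For a two-player game this is precisely the situation that becomes supermodular after reversing the order on a single player's strategy. Accordingly I would reparameterize the licensee's strategy as $-\Probl$ (equivalently, endow its interval $[0,1]$ with the reverse order), so that a nonpositive cross-partial in the original variables becomes the required nonnegative increasing-differences condition in $(\Proba, -\Probl)$.

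The computational core is to establish that $\frac{\partial^2 \Urslrs}{\partial \Probl \partial \Proba} \le 0$ and $\frac{\partial^2 \Urdbrs}{\partial \Proba \partial \Probl} \le 0$. Substituting the inverse-demand expressions (\ref{eq:price-market-share-rs}) into the payoffs (\ref{eq:sl-profit-dynamic-rv-xx}), the $\Proba$-dependence of the licensee's revenue collapses, after simplification, to a single term proportional to $-\Probl\,\Proba\,\gy(\Proba)$; differentiating once in each variable then yields a cross-partial proportional to $-[\gy(\Proba)+\Proba\,\gy'(\Proba)]$, which is nonpositive because $\gy(\cdot)\ge 0$ and $\gy'(\cdot)\ge 0$ by Assumption \ref{assum:positive}. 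An analogous computation for the database's payoff gathers its $\Probl$-dependent terms and again produces a cross-partial proportional to $-[\gy(\Proba)+\Proba\,\gy'(\Proba)]$, nonpositive under the same assumption. The monotonicity and convexity of $\fx(\cdot)$ in Assumption \ref{assum:congestion} keep the remaining own-variable curvature terms well-behaved, but these do not enter the cross-partials.

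Finally I would assemble the pieces: with the licensee's order reversed, both payoffs exhibit nonnegative increasing differences in $(\Proba, -\Probl)$, so the MSCG is supermodular, and Topkis's theorem delivers at least one Nash equilibrium $(\Probl^{*},\Proba^{*})$. The step I expect to be the main obstacle is not the sign calculation but the coupled feasibility constraint $\Probl+\Proba\le 1$, which prevents the joint strategy space from being a clean product of lattices as Topkis's hypothesis formally requires. I would handle this either by arguing that best responses never push the profile outside the feasible simplex (negative prices, hence negative payoffs, deter it) or by restricting attention to the sublattice on which the inverse-demand map (\ref{eq:price-market-share-rs}) remains valid, thereby recovering the product-lattice structure needed to apply the theorem.
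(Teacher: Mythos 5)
Your proposal is correct and follows essentially the same route as the paper: exploit the one-dimensional (hence trivially lattice) strategy sets, show the cross-partials of both payoffs in (\ref{eq:sl-profit-dynamic-rv-xx}) reduce (after substituting (\ref{eq:price-market-share-rs}) and simplifying, so that $\pl\Probl$ and $\pa\Proba$ each carry the $\Proba$-coupling only through $-\Probl\,\Proba\,\gy(\Proba)$) to terms proportional to $-\left[\gy(\Proba)+\Proba\,\gy'(\Proba)\right]\le 0$, then reverse the order on the licensee's strategy so the game is supermodular in $(\Proba,-\Probl)$ and invoke Topkis's existence theorem. Your additional handling of the coupled feasibility constraint $\Probl+\Proba\le 1$ is a sound extra precaution but does not alter the argument.
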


The following proposition further gives the uniqueness condition of the Nash equilibrium in MSCG.
\begin{proposition}[Uniqueness]\label{thrm:NE-uniquness}
The MSCG has a unique Nash equilibrium $(\Probl^*, \Proba^*)$, if
$$
 \textstyle - \frac{  \partial^2{ \Urslrs({\Prob}_{\textsc{l}} , \Proba) } }{ \partial{ (-{\Prob}_{\textsc{l}}) }^2 } \geq \frac{  \partial^2{ \Urslrs( {\Prob}_{\textsc{l}} , \Proba)  } }{ \partial{ {(-{\Prob}_{\textsc{l}}) } }\partial{ \Proba  } },
~
 \textstyle - \frac{  \partial^2{ \Urdbrs( {\Prob}_{\textsc{l}} , \Proba) } }{ \partial{ (-{\Prob}_{\textsc{l}}) }^2 } \geq \frac{  \partial^2{ \Urdbrs( {\Prob}_{\textsc{l}} , \Proba)  } }{ \partial{ { \Proba } }\partial{ (-{\Prob}_{\textsc{l}})  } }.
$$
\end{proposition}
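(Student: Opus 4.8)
The plan is to treat the MSCG as a smooth game with strategic complementarities — already established to be supermodular with respect to $\Proba$ and $-\Probl$ in Proposition~\ref{thrm:NE-existence} — and to prove uniqueness by showing that the composed best-response map is non-expansive, so that the ordered extremal equilibria guaranteed by the lattice structure must collapse to a single point. Throughout I would work in the transformed strategy $x \eq -\Probl$ for the licensee together with $\Proba$ for the database, the coordinates in which supermodularity reads $\partial^2 \Urslrs/\partial x\,\partial \Proba \ge 0$ and $\partial^2 \Urdbrs/\partial \Proba\,\partial x \ge 0$.

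First I would argue that each player's payoff is strictly concave in its own strategy, so that the best responses are single-valued. Note that the two hypotheses bound the (nonnegative, by supermodularity) cross-partials from above by $-\partial^2\Urslrs/\partial x^2$ and $-\partial^2\Urdbrs/\partial \Proba^2$, respectively; for these bounds to carry content the own second derivatives must be strictly negative, which is exactly own-concavity. Under concavity the first-order conditions $\partial \Urslrs/\partial x = 0$ and $\partial \Urdbrs/\partial \Proba = 0$ define interior best-response functions $x = B_L(\Proba)$ and $\Proba = B_D(x)$, the boundary cases being handled by the usual truncation of the feasible region $\Probl+\Proba\le 1$.

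Next I would differentiate the first-order conditions via the implicit-function theorem, obtaining the best-response slope $B_L'(\Proba) = -(\partial^2 \Urslrs/\partial x\,\partial \Proba)/(\partial^2 \Urslrs/\partial x^2)$ and the analogous expression for $B_D'(x)$. Supermodularity makes both slopes nonnegative, and the two hypotheses are \emph{precisely} the statements $0 \le B_L'(\Proba) \le 1$ and $0 \le B_D'(x) \le 1$. Hence the composite map $\Phi \eq B_L \circ B_D$, whose fixed points are exactly the Nash equilibria of the MSCG (and, by Proposition~\ref{lemma:game_tranform}, correspond to equilibria of the original PCG), is nondecreasing with $0 \le \Phi' \le 1$. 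I would then invoke the lattice structure of supermodular games \cite{topkis1998supermodular}: the equilibrium set has a smallest and a largest element, which in the $(x,\Proba)$ coordinates are jointly ordered as $(x^L,\Proba^L) \le (x^H,\Proba^H)$. Integrating the slope bounds along the two best responses between these equilibria yields $x^H - x^L = \int_{\Proba^L}^{\Proba^H} B_L'\,d\Proba \le \Proba^H - \Proba^L$ and $\Proba^H - \Proba^L = \int_{x^L}^{x^H} B_D'\,dx \le x^H - x^L$, so that $x^H - x^L \le \Proba^H - \Proba^L \le x^H - x^L$ and all inequalities are equalities.

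I expect the main obstacle to be this last step: converting the \emph{non-strict} diagonal-dominance hypotheses into a genuine collapse of the extremal equilibria. The chain of equalities only forces $B_L' \equiv 1$ on $[\Proba^L,\Proba^H]$ and $B_D' \equiv 1$ on $[x^L,x^H]$, i.e. the borderline case where both stated inequalities hold with equality simultaneously along the connecting path. Ruling this degeneracy out is what actually delivers uniqueness, and I would do so by exploiting the explicit price map in (\ref{eq:price-market-share-rs}) and the strict curvature of $\fx(\cdot)$ and $\gy(\cdot)$ from Assumptions~\ref{assum:congestion}--\ref{assum:positive} (strict convexity/concavity making the inequalities strict in the interior), rather than the abstract second-derivative conditions alone. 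A secondary technical point, worth checking early, is interiority of the best responses on the whole relevant region, so that the implicit-function computation of the slopes is valid and no equilibrium is lost at the boundary $\Probl+\Proba = 1$.
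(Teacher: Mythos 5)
Your overall route is the same one the paper relies on: the stated inequalities are the standard dominant-diagonal criterion for supermodular games, and the paper's own ``proof'' is nothing more than an appeal to that criterion (``follow the standard supermodular game theory''), so your reconstruction through single-valued best responses, the slope bounds $0 \le B_L' \le 1$, $0 \le B_D' \le 1$, and the collapse of the extremal equilibria is faithful to what the paper intends. The genuine problem is the step you yourself flag as the crux. Your plan to rule out the degenerate case ($B_L' \equiv 1$ on $[\Proba^L,\Proba^H]$ and $B_D' \equiv 1$ on $[x^L,x^H]$) by invoking ``strict curvature of $\fx(\cdot)$ and $\gy(\cdot)$ from Assumptions~\ref{assum:congestion}--\ref{assum:positive}'' is unsupported: those assumptions require only convexity of $\fx$ and concavity of $\gy$, not strict versions, and the paper's own illustrative specification $\fx(\Proba+\Probb)=\alpha_1-\beta_1\cdot(\Proba+\Probb)$, $\gy(\Proba)=\beta_2\cdot\Proba$ is exactly linear --- zero curvature --- yet it is precisely a case the proposition is meant to cover. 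So strict curvature is simply not available, and your chain of equalities cannot be broken this way.

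The clean repair is to read the diagonal-dominance hypotheses as strict, which is evidently how the paper itself uses them: in the linear example it derives the uniqueness condition $\RL - \alpha_1 - \beta_1 > \beta_2$, a \emph{strict} inequality. With strict inequalities, $-\partial^2 \Urslrs/\partial x^2 > \partial^2 \Urslrs/\partial x\,\partial\Proba \ge 0$ (with $x=-\Probl$) on the compact strategy region $\{\Probl,\Proba\ge 0,\ \Probl+\Proba\le 1\}$, and continuity of the second derivatives gives uniform bounds $\sup B_L' \le \lambda_1 < 1$ and $\sup B_D' \le \lambda_2 < 1$; the composite $\Phi = B_L \circ B_D$ is then a contraction and uniqueness follows from the Banach fixed-point theorem, with no lattice or extremal-equilibria machinery needed. (Alternatively, keep your integral argument: the bounds give $x^H - x^L \le \lambda_1(\Proba^H - \Proba^L)$ and $\Proba^H - \Proba^L \le \lambda_2(x^H - x^L)$, which immediately force both gaps to zero.) As literally stated with weak inequalities, the proposition admits the knife-edge continuum of equilibria you describe, so no argument can close your gap without either strictness or an extra structural assumption; your secondary worry about interiority at the boundary $\Probl + \Proba = 1$ is legitimate but minor by comparison.
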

The above uniqueness conditions are quite general, and follow the standard supermodular game theory.
Next we provide a specific example to illustrate these conditions more intuitively.
Consider the following example: $\fx(\Proba + \Probb) = \alpha_1 - \beta_1 \cdot ( \Proba + \Probb )$ and $\gy(\Proba) = \beta_2 \cdot \Proba$. That is, both positive and negative network effects change linearly with the respective market shares.  In this example, we can obtain the following uniqueness condition:
$\RL - \alpha_1 - \beta_1 > \beta_2 $.
Namely, if $\RL$ is large enough or $\beta_2$ is small enough, there is a unique Nash equilibrium in MSCG.



Once we obtain the Nash equilibrium $(\Probl^*, \Proba^*)$ of MSCG, we can immediately obtain the Nash equilibrium $(\pl^*, \pa^*)$ of the original PCG by (\ref{eq:price-market-share-rs}).
It is notable that we may not be able to derive the analytical Nash equilibrium of MSCG, as we use the generic functions $\fx(\cdot)$ and $\gy(\cdot)$.
Nevertheless, thanks to the nice property of supermodular game, we can easily numerically compute the Nash equilibrium of MSCG through, for example, the simple best response iteration in \cite{report}.

\section{Layer I -- Commission Bargaining Solution}\label{sec:layer1}



In this section, we study the commission negotiation among the database and the spectrum licensee in Layer I, based on their predictions of the price equilibrium in Layer II and the market equilibrium in Layer III.

\rev{
Specifically, we want to find a feasible revenue sharing percentage $\delta \in [0, 1]$ under RSS that is \emph{satisfactory} for both the database and the spectrum licensee.
}
We formulate the commission negotiation problem as a \emph{one-to-one bargaining}, and study the bargaining solution using the Nash bargaining theory \cite{harsanyi1977bargaining}.

%

Following the Nash bargaining framework, we first derive the database's and the licensee's payoffs when reaching an agreement and when \emph{not} reaching any agreement (hence reaching the disagreement).
Specifically, when reaching an agreement $\delta$, the database's and the licensee's payoffs are  $\Udbrs (\delta)$ and $\Uslrs (\delta)$ derived in
Section \ref{sec:layer2}, respectively.
When not reaching any agreement (reaching the disagreement), the licensee's profit is $\Uslo = 0$, and the database's profit is $\Udbo = \pa^{\dag} \cdot \Proba^{\dag}(\pa^{\dag}) $, where $\pa^{\dag}$ and $\Proba^{\dag}(\pa^{\dag})  $ are the database's optimal price and the corresponding market share in the pure information market.\footnote{Note that such an optimal price and market share can be derived in the same way as in Section \ref{sec:layer2}, by simply setting $\RL = 0$.}
Then, the Nash bargaining solution is formally given by
\begin{equation}\label{eq:NBS-RS}
\begin{aligned}
\max_{\delta \in [0,1]}~ &\left( \Udbrs(\delta) - \Uslo \right) \cdot \left( \Uslrs(\delta) - \Udbo \right) \\
\text{s.~t.~} & \Udbrs(\delta) \geq \Uslo,~~ \Uslrs(\delta) \geq \Udbo.
\end{aligned}
\end{equation}

Note that analytically solving (\ref{eq:NBS-RS}) may be difficult, as it is hard to characterize the analytical forms of $\Udbrs(\delta)$  and $\Uslrs(\delta) $.
Nevertheless, we notice that the bargaining variable $\delta$ lies in closed and bounded range of $[0,1]$, and the objective function of (\ref{eq:NBS-RS}) is bounded.
Hence, there must exist an optimal solution for (\ref{eq:NBS-RS}), which can be found by using many one-dimensional search methods \rev{(e.g., \cite{bargai2001search})}.
%
\begin{proposition}[$\delta$-Bargaining Solution]\label{lemma:existence-NBS}
There must exist an optimal solution for the problem (\ref{eq:NBS-RS}). \rev{If the objective function of (\ref{eq:NBS-RS}) is monotonic, the optimal solution is unique.}
\end{proposition}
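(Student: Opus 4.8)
Write $\Phi(\delta) \eq \left( \Udbrs(\delta) - \Uslo \right)\left( \Uslrs(\delta) - \Udbo \right)$ for the bargaining objective and $\mathcal{F} \eq \{ \delta \in [0,1] : \Udbrs(\delta) \geq \Uslo,\ \Uslrs(\delta) \geq \Udbo \}$ for the feasible set. The plan is to prove existence by applying the Weierstrass extreme value theorem on the compact set $\mathcal{F}$, and uniqueness by an endpoint argument for a strictly monotone objective. The entire argument hinges on the one analytic fact that the text's one-line justification silently assumes: that $\Udbrs(\cdot)$ and $\Uslrs(\cdot)$, and hence $\Phi(\cdot)$, are (at least upper semi-) continuous in $\delta$.

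First I would establish this continuity. For each fixed $\delta$, Layer II (Propositions \ref{thrm:NE-existence}--\ref{thrm:NE-uniquness}) produces an equilibrium $(\Probl^{*}(\delta), \Proba^{*}(\delta))$ of the MSCG, and by (\ref{eq:price-market-share-rs})--(\ref{eq:sl-profit-dynamic-rv-xx}) the payoffs are continuous functions of $(\delta, \Probl^{*}, \Proba^{*})$, since $\fx(\cdot)$ and $\gy(\cdot)$ are continuous (Assumptions \ref{assum:congestion}--\ref{assum:positive}). Hence it suffices to show that the equilibrium selection depends continuously on $\delta$. I would obtain this from Berge's maximum theorem: the players' best-response correspondences are upper hemicontinuous jointly in $(\delta, \Probl, \Proba)$, so the equilibrium correspondence is upper hemicontinuous in $\delta$, and under the uniqueness condition of Proposition \ref{thrm:NE-uniquness} it is single valued, therefore a continuous function of $\delta$ over its interior range. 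Composing with the payoff maps then yields continuity of $\Udbrs(\cdot)$, $\Uslrs(\cdot)$ and $\Phi(\cdot)$.

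Given (upper semi-) continuity, the rest is quick. The set $\mathcal{F}$ is the intersection of $[0,1]$ with two superlevel sets of continuous functions, hence a closed subset of the compact interval $[0,1]$ and therefore compact; after checking $\mathcal{F} \neq \emptyset$ by evaluating the two participation constraints at the endpoint of $[0,1]$ most favourable to the party with the tighter constraint (intuitively the hybrid market produces at least the disagreement surplus, so some split $\delta$ keeps both parties weakly above their disagreement payoffs), the Weierstrass theorem guarantees that $\Phi$ attains its maximum on $\mathcal{F}$, which is the first claim. For uniqueness, note that $\mathcal{F}$ attains both $\inf\mathcal{F}$ and $\sup\mathcal{F}$; if $\Phi$ is strictly monotone on $[0,1]$ it is strictly monotone on $\mathcal{F}$, so its maximum is attained at exactly one endpoint---$\sup\mathcal{F}$ if $\Phi$ is increasing and $\inf\mathcal{F}$ if decreasing---and no other feasible point can tie it, giving a unique maximizer $\delta^{*}$.

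The main obstacle is precisely the continuity in $\delta$ of the Layer II equilibrium payoffs: because that equilibrium is characterized only implicitly (through fixed-point / first-order conditions rather than in closed form), its dependence on the parameter $\delta$ cannot be read off directly and must be argued through the maximum theorem together with the single-valuedness supplied by Proposition \ref{thrm:NE-uniquness}. A secondary delicacy is the degenerate endpoint $\delta = 1$, where the factor $(1-\delta)$ scaling the licensee's objective vanishes and its best response ceases to be single valued; there I would note that upper semicontinuity of $\Phi$ (all Weierstrass actually requires) still holds, or that the point is simply excluded by the licensee's participation constraint. Everything else---closedness and compactness of $\mathcal{F}$ and the endpoint argument for a monotone objective---is routine once continuity is in hand.
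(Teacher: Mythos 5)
Your proposal is correct and follows the same basic skeleton as the paper---compactness of the bargaining domain plus an extreme-value argument for existence, and an endpoint argument under monotonicity for uniqueness---but it is substantially more rigorous, and the extra rigor matters. The paper's entire justification is that $\delta$ ranges over the closed bounded interval $[0,1]$ and the objective of (\ref{eq:NBS-RS}) is \emph{bounded}; as you correctly point out, boundedness alone does not imply the supremum is attained, so the paper's one-line proof has a genuine hole that your Berge-type argument fills. Your route---payoffs in (\ref{eq:sl-profit-dynamic-rv-xx}) are jointly continuous in $(\delta,\Probl,\Proba)$ by Assumptions \ref{assum:congestion}--\ref{assum:positive}, the Layer II equilibrium correspondence therefore has a closed graph, and single-valuedness under the condition of Proposition \ref{thrm:NE-uniquness} upgrades upper hemicontinuity to continuity of $\delta \mapsto (\Probl^{*}(\delta),\Proba^{*}(\delta))$, hence of $\Udbrs(\cdot)$ and $\Uslrs(\cdot)$---is exactly the missing ingredient, and your handling of the degenerate endpoint $\delta=1$ (where the licensee's objective is scaled by zero and its best response is not single valued, but where the constraint $\Uslrs(\delta)\geq\Udbo>0$ excludes the point anyway) is a detail the paper never confronts. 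Two further refinements you make are also improvements rather than deviations: you correctly strengthen ``monotonic'' to \emph{strictly} monotone (a constant objective is monotone yet has non-unique maximizers, so the proposition as literally stated needs this strengthening), and you flag that non-emptiness of the feasible set $\mathcal{F}$ must be checked, which the paper silently assumes. The only soft spot in your write-up is precisely that feasibility check: your argument that ``the hybrid market produces at least the disagreement surplus'' is heuristic, since the constraint $\Uslrs(\delta)\geq\Udbo$ compares the licensee's hybrid-market payoff against the database's pure-information-market payoff (note the disagreement points in (\ref{eq:NBS-RS}) are cross-paired), and this inequality does not obviously hold at any $\delta$ without further argument; it would be cleaner to state non-emptiness of $\mathcal{F}$ as an explicit standing assumption of the proposition, as implicitly done in the paper.
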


\rev{

\begin{figure*}[t]
\centering
\begin{minipage}[t]{0.32\linewidth}
\centering
  \includegraphics[width=2in]{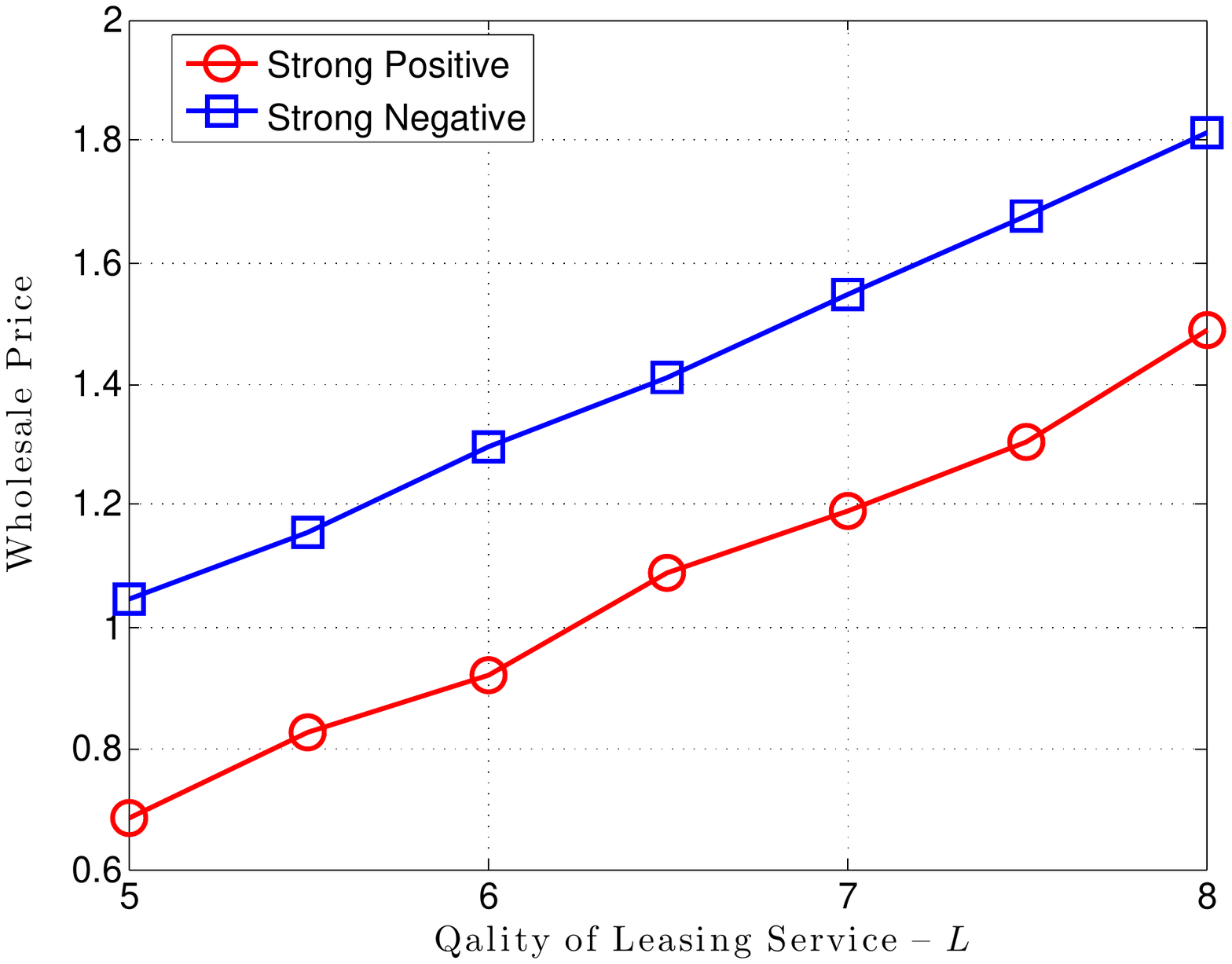}  
  \caption{\rev{Equivalent wholesale price vs $\RL$ under strong positive network externality and strong negative network externality}}.\label{fig:price-vs-diff-L}
\end{minipage}
\begin{minipage}[t]{0.32\linewidth}
\centering
  \includegraphics[width=2in]{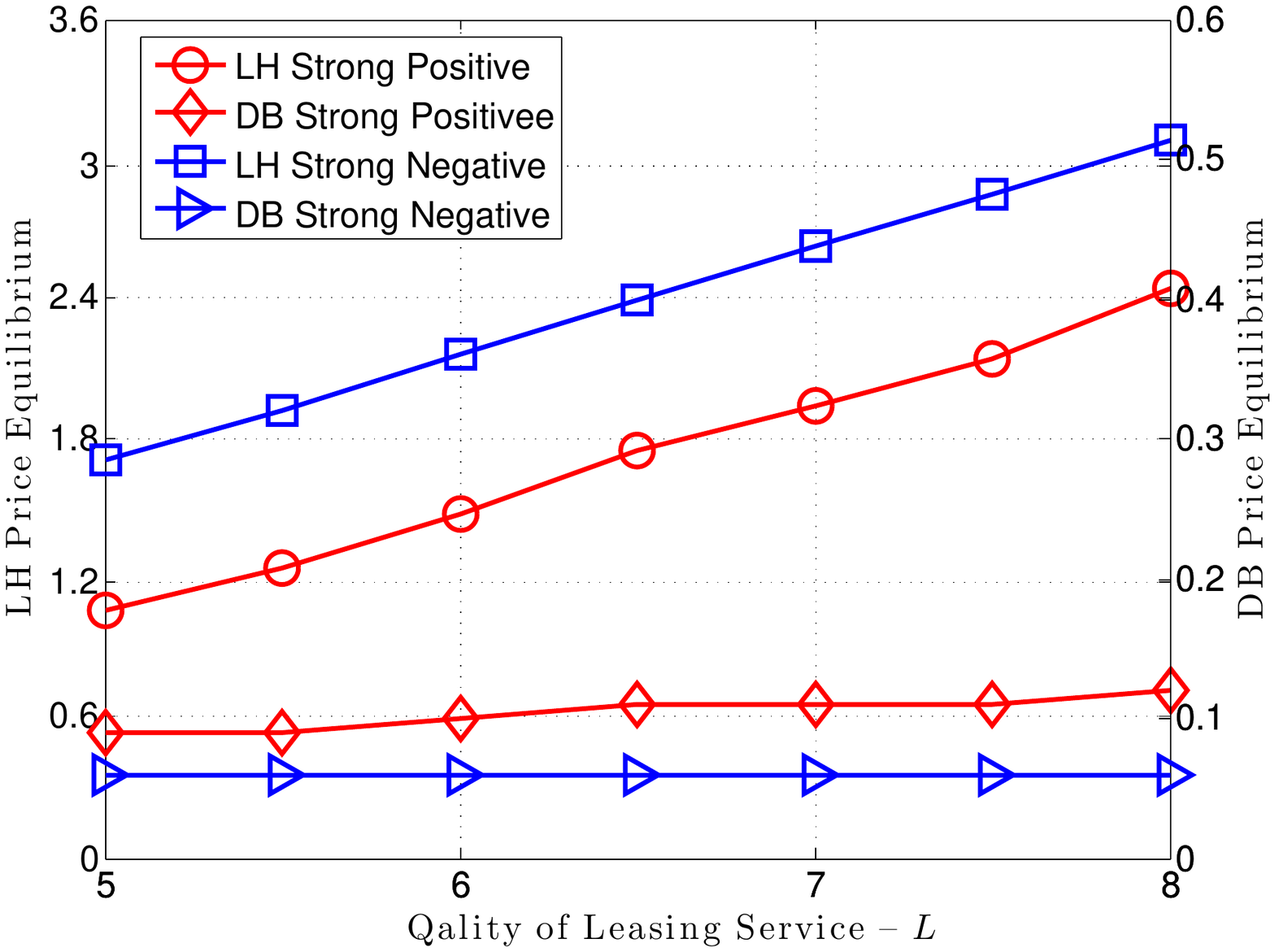}
  \caption{\rev{The \db's and the \lh's retail price (where LH denotes Licensee and DB denotes database) vs $\RL$ under different network externality}}.\label{fig:price-vs-diff-L2}
\end{minipage}
\begin{minipage}[t]{0.32\linewidth}
\centering
  \includegraphics[width=2.5in]{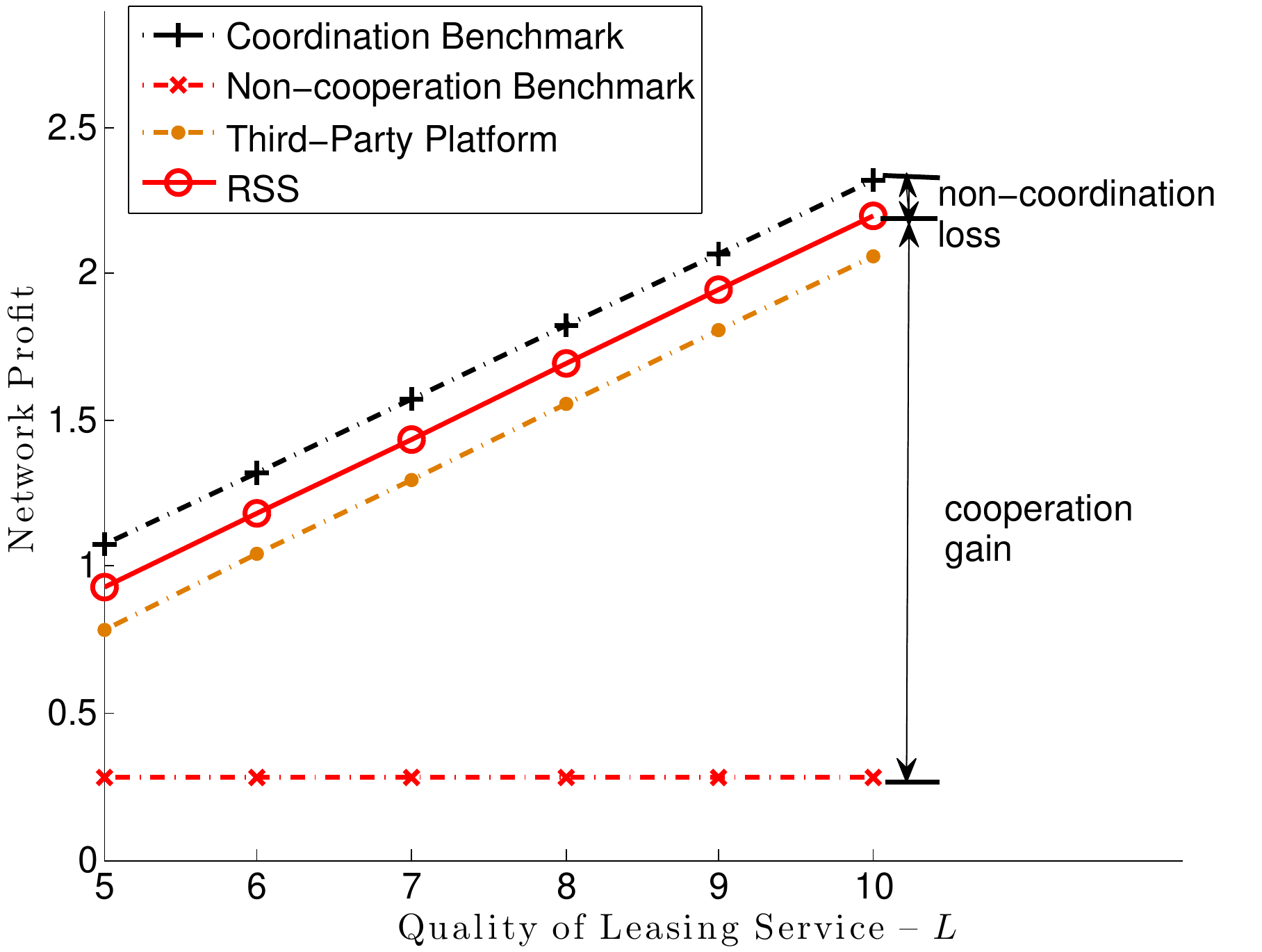}
  \caption{\rev{Network profit vs $\RL$ under both revenue sharing scheme (RSS) under fixed network externality}.}\label{fig:NP-vs-diff-L}
\end{minipage}
\vspace{-4mm}
\end{figure*}

\section{Simulation Result}\label{sec:simulation}

In this section,
we provide simulations to evaluate the system performance (e.g., the network profit, the \db's profit, and the \lh's profit) achieved under revenue sharing scheme (RSS).

\subsubsection{Commission Bargaining Solution and Equilibrium Retail Price}
We first show the Nash bargaining solution of RSS, and the corresponding equilibrium retail prices.
In this simulation:
We choose the function $\fx( \Probl ) = \alpha_1 - {\beta_1}\cdot ( 1 - \Probl )^{\gamma_1}$ to model the negative network externality, and choose the function $\gy = \alpha_2 + (\beta_2 - \alpha_2) \cdot {\Proba}^{\gamma_2}$ to model the positive network externality.
We fix $\alpha_1 = 1.8$, $\alpha_2 = 1$, $\beta_1 = 0.8$, $\beta_2 = 1.2$, $\gamma_1 = 0.8$, $\gamma_2 = 0.6$, and change the leasing service quality $\RL$ from $6$ to $10$.

%
%

Figure \ref{fig:price-vs-diff-L} shows the bargaining solution under different degree of network externality.
For fair comparison, we transform the revenue sharing factor $\delta^{*}$ under RSS into the equivalent wholesale price by $\w^{*} = \delta^{*} \cdot \Probl^{*}$.
From this figure, we can see that the equivalent wholesale price increases with $\RL$ under different degree of network externality.
This is because a higher quality of leasing service attracts more \eus~to the licensed channels, and hence a higher wholesale price is desired to compensate the database's  revenue loss from the advanced service.
Moreover, the wholesale price under strong negative networker externality is higher than that under strong positive network externality.
\rev{This is because when the information market is negative network externality, the increases number of {\eus}  will severely jeopardize the quality of unlicensed TV channels. Hence, {\eus} are willing to choose licensed channels in order to obtain guarantee quality of service. In such case, the database wants to increase the wholesale price to maintain its revenue.}

Figure \ref{fig:price-vs-diff-L2} shows the equilibrium retail prices under the bargaining solution.
We can see that the equilibrium price of the \db~(denoted by DB) is almost independent of $\RL$, while the equilibrium   price of the \lh~(denote by LH) increases with $\RL$.
This is because a higher quality of leasing service will attract more \eus~to the licensed channels, and thus allows the \lh~to charge a higher service price.
Moreover, the equilibrium price of \lh~under strong negative network externality is higher than that under strong positive externality.
This is because under strong negative network externality, the small increase of {\eus} in the information market will dramatically decreases the quality of unlicensed TV channels. Hence, the licensee can charge a higher price due to more {\eus} will choose leasing service.


\subsubsection{System Performance}
Now we show the network profit, i.e., the aggregate profit of the \db~and the \lh~achieved under both RSS in Figure \ref{fig:NP-vs-diff-L} given the fixed network externality.
In this figure, we use the black dash-dot line (with mark $+$) to denote the coordination benchmark, where the \db~and the \lh~act as an integrated party to maximize their aggregate profit. We use the red dash-dot line (with mark $\times$) to denote the non-cooperation benchmark (with pure information market only), where the \db~does not want to display the \lh's licensed \ch~information. The brown dash-dot line (with mark $\bullet$) denotes the case where the \lh~sells channels on a third-party spectrum market platform.

Figure \ref{fig:NP-vs-diff-L} shows that our proposed RSS outperform the non-cooperation scheme significantly (e.g., increasing the network profit up to $87\%$).
It further shows that the gap between our proposed RSS and coordination benchmark is small (e.g., less than $4\%$).
Such a gap is caused by the \emph{imperfect} coordination of the database and the licensee.
In other words, they cooperate somewhat  but not coordinate completely.
We call this gap as the non-coordination loss.
We can also see that our proposed RSS always outperforms the scheme with a third-party platform, in terms of the network profit.

}
\vspace{-1mm}
\section{Conclusion}\label{sec:con}
\rev{
In this paper, we proposed a database-provided hybrid spectrum and information market, and analyze the interactions among the database, the licensee, and the unlicensed users systematically.
We also analyze how the network externalities (of the information market) affect these interactions.
Our work not only captures the performance gain introduced by the hybrid market, but also characterizes the impact of different degree of networker externality on the market equilibrium behaviours of all parties involved.
There are several possible directions to extend this work.
One is to consider an oligopoly scenario with multiple databases (hence multiple platforms).
In this
scenario, databases compete with each other for unlicensed users as well as for spectrum licensees.
}

\vspace{-4mm}
{

}

\appendix
\section{Appendix}\label{sec:appendix}

\section{Appendix}\label{sec:appendix}

\subsection{Property of Information Market}
In this section, we will discuss the properties of two types of network externalities in the information market: the negative network externality and the positive network externality.
For illustration purpose, we explicit define the advance information as those proposed by Luo \emph{et al.} in \cite{luo2014wiopt,luo2014SDP}. In the following, we first define the advanced information as the interference level on each channel, then we characterize the information value to the users. Based on that, we can further characterize the properties of the information market.

\subsubsection{Interference Information}
\label{sec:interference_info}
For each {\eu} $n\in \Nset$ operating on the TV \ch, each channel $\k$ is associated with an \emph{interference level}, denoted by  $\InfTot_{\n,\k}$, which reflects the aggregate interference from all other nearby devices (including TV stations
and other {\eus}) operating on this channel.
Due to the fast changing of wireless channels and the uncertainty of {\eus}' mobilities and activities, the interference $\InfTot_{\n,\k}$ is a random variable. We impose assumptions on the interference $\InfTot_{\n,\k}$ as follows.
\begin{assumption}\label{assum:iid}
For each {\eu} $n\in \Nset$, each channel $\k$'s interference level $\InfTot_{\n,\k}$ is \emph{temporal-independence} and \emph{frequency-independence}.
\end{assumption}

This assumption shows that
(i) the interference $\InfTot_{\n,\k}$ on channel $\k$ is independent identically distributed (iid) at different times, and (ii)
the interferences on different channels, $\InfTot_{\n,\k}, \k\in\Kset$, are also iid at the same time.\footnotesc{\rev{Note that the iid assumption is a reasonable approximation of the practical scenario. This is because {\eus} with basic service will randomly choose one TV \ch, hence the number of such {\eus} per channel will follow the same distribution. For {\eus} with advanced service, they will go to the TV \ch~with the minimum realized interference. If the interference among each pair of users is iid over time, then statistically the number of such users in each channel will also follow the same distribution. Note that even though all channel quality distributions are the same, the realized instant qualities of different channels are different. Hence, the advanced information provided by the \db~is still valuable as such an advanced information is accurate interference information.} }
As we are talking about a general \eu~$n$, \textbf{we will omit the {\eu} index $n$ in the notations (e.g., write $\InfTot_{\n,\k}$ as $\InfTot_{\k}$), whenever there is no confusion caused.}
Let $H_{\InfTot}(\cdot)$ and $h_{\InfTot}(\cdot)$ denote the cumulative distribution function (CDF) and probability distribution function (PDF) of $\InfTot_{\k}$, $\forall \k\in\Kset$.\footnotesc{In this paper, we will conventionally  use $H_X(\cdot)$ and $h_X(\cdot)$ to denote the CDF and PDF of a random variable $X$, respectively.}~~~~

Usually, a particular {\eu}'s experienced interference $\InfTot_{\k}$ on a \ch~$\k$ consistss of the following three components:
\begin{enumerate}
\item
$\InfTV_{\k}$: the interference from licensed TV stations;
\item
$\InfEU_{\k,m}$: the interference from another {\eu} $m$ operating on the same channel $k$;
\item
$\InfOut_{\k}$: any other interference from outside systems.
\end{enumerate}
The total interference on channel $k$ is
$
\InfTot_{\k} = \InfTV_k + \InfEU_k + \InfOut_k
$, where  $\InfEU_k  \triangleq \sum_{m \in \Nkset} \InfEU_{\k,m}$ is the total interference from all other {\eus} operating on channel $k$ (denoted by $\Nkset$).
Similar to $\InfTot_{\k}$, we also assume that  $ \InfTV_k , \InfEU_k , \InfEU_{\k,m}$, and $ \InfOut_k$ are random variables with \emph{temporal-independence} (i.e., iid across time) and \emph{frequency-independence} (i.e., iid across frequency).
We further assume that $\InfEU_{\k,m}$ is \emph{user-independence}, i.e., $\InfEU_{\k,m}, m\in \Nkset$, are iid.
{It is important to note that \textbf{different {\eus} may experience different interferences $\InfTV_k$ (from TV stations), $ \InfEU_{\k,m} $ (from another \eu~operating on the same \ch), and $ \InfOut_k$ (from outside systems) on a channel $k$, as we have omitted the \eu~index $n$ for all these notations for clarity.}}

Next we discuss these interferences in more details.

\begin{itemize}
\item
The \db~is able to compute the interference $\InfTV_k$ from TV stations to every {\eu} (on channel $k$), as it knows the locations and channel occupancies of all TV stations.~~~~

\item
The \db~cannot compute the interference
$\InfOut_k$ from outside systems, due to the lack of outside interference source information.
Thus, the interference $\InfOut_{\k}$ will \emph{not} be included in a database's advanced information sold to {\eus}.~~~~~~~~

\item
The \db~may or may not be able to compute the interference $\InfEU_{\k,m}$ from another {\eu} $m$, depending on whether {\eu} $m$ subscribes to the \db's advanced service.
Specifically, if {\eu} $m$ subscribes to the advanced service, the \db~can predict its channel selection (since the {\eu} is fully rational and will always choose the channel with the lowest interference level indicated by the \db~at the time of subscription), and thus can compute its interference to any other \eu.
However, if {\eu} $m$ only chooses the \db's basic service, the \db~cannot predict its channel selection, and thus cannot compute its interference to other \eus.~~~~
\end{itemize}

For convenience, we denote $\Nkset^{[\A]}$, as the set of {\eus} operating on \ch~$k$ and subscribing to the \db's advance service (i.e., those choosing the strategy $\l = \A$), and $\Nkset^{[\B]}$ as the set of {\eus} operating on channel $k$ and choosing the \db's basic service (i.e., those choosing the strategy $\l = \B$).
That is, $\Nkset^{[\A]} \bigcup \Nkset^{[\B]} = \Nkset $.
Then, for a particular {\eu}, its experienced interference (on channel $k$) \textbf{known by the \db} is
\begin{equation}\label{eq:known_inf}
\begin{aligned}
\textstyle
\InfKnown_{\k} \triangleq \InfTV_{\k} + \sum_{m \in \Nkset^{[\A]}} \InfEU_{\k,m},
\end{aligned}
\end{equation}
which contains the interference from TV licensees and all {\eus} (operating on channel $k$)  subscribing to the \db's advanced service.
The \eu's experienced interference (on channel $k$) \textbf{\emph{not} known by the \db} is
\begin{equation}\label{eq:unknown_inf}
\begin{aligned}
\textstyle
\InfUnknown_{\k}  \triangleq \InfOut_{\k} + \sum_{m \in \Nkset^{[\B]}} \InfEU_{\k,m},
\end{aligned}
\end{equation}
which contains the interference from outside systems and all {\eus} (operating on channel $k$) choosing the \db's basic service.
Obviously, both $\InfUnknown_{\k}$ and $\InfKnown_{\k}$ are also random variables with temporal- and frequency-independence.
Accordingly, the total interference on {\ch} $\k$ for a {\eu} can be written as
$\textstyle
\InfTot_{\k} = \InfKnown_{\k} + \InfUnknown_{\k}.$~~~~~~

\textbf{Since the \db~knows only $\InfKnown_{\k}$, it will provide this information (instead of the total interference $\InfTot_k$) as the {advanced service} to a subscribing {\eu}.}
It is easy to see that the more {\eus} subscribing to the \db's advanced service, the more information the \db~knows, and the more accurate the \db~information will be.

Next we can characterize the accuracy of a database's information explicitly. Note that $\Proba$ and $\Probl$ denote the fraction of {\eus} choosing the advanced service and leasing licensed spectrum, respectively. Moreover, $(1 - \Proba - \Probl)$ denotes the fraction of {\eus} choosing the basic service.
Hence, there are $(1 - \Probl) \cdot \N$ {\eus} in the network that we consider operating on the TV \chs.
Due to the Assumption \ref{assum:iid}, it is reasonable to assume that each channel $k\in\Kset$ will be occupied by an average of $\frac{\N}{\K} \cdot ( 1 - \Probl)$ {\eus}.
Then, among all $\frac{\N}{\K} \cdot ( 1 - \Probl)$ {\eus} operating on channel $k$, there are, \emph{on average}, $\frac{\N}{\K}\cdot\Proba$ {\eus}  subscribing to the \db's advanced service, and $\frac{\N}{\K}\cdot ( 1 - \Proba - \Probl )$ {\eus} choosing the \db's basic service.
That is, $| \Nkset | = \frac{\N}{\K} \cdot ( 1 - \Probl)$, $| \Nkset^{[\A]} | = \frac{\N}{\K}\cdot\Proba$, and $|\Nkset^{[\B]} | = \frac{\N}{\K}\cdot(1-\Proba-\Probl)$.\footnotesc{{Note that the above discussion is from the aspect of expectation, and in a particular time period, the realized numbers of {\eus} in different channels may be different.}}
Finally, by the {user-independence} of $\InfEU_{\k,m}$, we can immediately calculate the distributions of $ \InfKnown_{\k}$ and $ \InfUnknown_{\k}$ under any given market share $\Proba$ and $\Probl$ via (\ref{eq:known_inf}) and (\ref{eq:unknown_inf}).

\subsubsection{Information Value}
Now we evaluate the value of the \db's advanced information to {\eus}, which is
reflected by the {\eu}'s benefit (utility) that can be achieved from this information.

We first consider the expected  utility of a {\eu} when choosing the \db's basic service (i.e., $\l=\B$).
In this case, the {\eu} will randomly choosing a \tvch~based on the information provided in the free basic service,
and its expected data rate is
\begin{equation}\label{eq:rate-random-fixed}
\begin{aligned}
\textstyle
\R_0( 1 - \Probl)  \textstyle = \Ex_{Z} [\rt(\InfTot)] = \int_{z} \rt(z) \mathrm{d} H_{\InfTot}(z), \\
\end{aligned}
\end{equation}
where $\rt(\cdot)$ is the transmission rate function (e.g.,
the Shannon capacity) under any given interference.
As shown in Section \ref{sec:interference_info}, each channel $k\in\Kset$ will be occupied by an average of $\frac{\N}{\K} \cdot ( 1 - \Probl)$ {\eus} based on the Assumption \ref{assum:iid}. Hence,
$\R_0( 1 - \Probl) $ depends only on the distribution of the total interference $\InfTot_k$, and thus depends on the fraction of {\eus} operating on TV \chs~(i.e., $1 - \Probl$). Then the expected utility provided by the basic service is:
\begin{equation}\label{eq:utility-random-fixed}
\begin{aligned}
\textstyle
\RB(1 - \Probl) = \textstyle \ut\bigg(\R_0( 1 - \Probl)\bigg),
\end{aligned}
\end{equation}
where $\ut(\cdot)$ is the utility function of the \eu. We can easily check that
the more {\eus} operating on the TV \chs, the higher value of $\InfTot_k$ is, and thus the lower expected utility provided by the basic service. Hence, the basic service's expected utility reflects the congestion level of the TV \chs.
We use the function $\fx(\cdot)$ to characterize the congestion effect and have $\fx(1 - \Probl) = \RB( 1 - \Probl)$.

Then we consider the expected utility of a {\eu} when subscribing to the \db's advance service.
In this case, the {\db} returns the interference $\{\InfKnown_{k}\}_{k\in\Kset}$ to the {\eu} subscribing to the advanced service, together with the basic information such as the available channel list.
For a rational {\eu}, it will always choose the channel with the minimum $\InfKnown_{k}$ (since $\{\InfUnknown_{k}\}_{k\in\Kset}$ are iid).
Let $\InfKnownMin^{[l]} =  \min\{ \InfKnown_{1},  \ldots, \InfKnown_{K} \} $ denote the minimum interference indicated by the \db's advanced information.
Then, the actual interference experienced by a {\eu} (subscribing to the \db's advanced service) can be formulated as the sum of two random variables, denoted by $\InfTotA = \InfKnownMin + \InfUnknown$. Accordingly, the {\eu}'s expected data rate under the strategy $\l = \A$ can be computed by
\begin{equation}\label{eq:rate-pay-fixed}
\begin{aligned}
\Ra(\Proba, \Probl) & \textstyle = \Ex_{\InfTotA} \big[ \rt \left( \InfTotA \right) \big] = \int_z \rt(z)  \mathrm{d} H_{\InfTotA}(z),
\end{aligned}
\end{equation}
where
$H_{\InfTotA}(z)$ is the CDF of $\InfTotA$. It is easy to see that $\Ra$ depends on the distributions of $\InfKnown_k$ and $\InfUnknown_k$, and thus depend on the market share $\Probl$ and $\Proba$.
Thus, we will write $\Ra$ as $\Ra(\Probl,\Proba)$. Accordingly, the advanced service's utility is:
 \begin{equation}\label{eq:utility-pay-fixed}
\begin{aligned}
\RA(\Probl,\Proba) & \textstyle  \triangleq \ut\bigg( \Ra( \Probl,\Proba ) \bigg)
\end{aligned}
\end{equation}

Note that the congestion effect also affects the value of $\RA$. However, compared with the utility of {\eu} choosing basic service, the benefit of a {\eu} subscribing to the \db's advanced information is coming from the $\InfKnownMin^{[l]}$, i.e., the minimum interference indicated by the \db's advanced information. As the value of $\InfKnownMin^{[l]}$ depends on the $\Proba$ only, we can get the approximation $\RA = \fx(1 - \Probl) + \gy(\Proba)$, where function $\gy(\cdot)$ characterize the benefit brought by $\InfKnownMin^{[l]}$ and denotes the positive network effect.



By further checking the properties of $\RB(1 - \Probl)$ and $\RA(\Probl,\Proba)$, we have the Assumption \ref{assum:congestion} and Assumption \ref{assum:positive}.

%

\end{document}